\documentclass[journal]{IEEEtran}
\usepackage{amsmath,amssymb,amsthm,amsfonts}
\usepackage{enumitem}  % for using the left marging in itemize
\usepackage[
backend=bibtex,
style=numeric,
giveninits=true,
sorting=nyt
]{biblatex}
\usepackage{graphicx}
\usepackage{float}
\usepackage{subcaption}
\usepackage{booktabs}
\usepackage{multirow}
\usepackage{algorithm}
\usepackage{algpseudocode}
\usepackage[hidelinks,pdfauthor={Hassan Ou-azzou, Imad Kissami, El Houcine Bergou}]{hyperref}
\hypersetup{
	colorlinks   = true,
	citecolor    = blue
}
\usepackage{relsize}
\usepackage{cleveref}
\usepackage{comment}
\usepackage{color,xcolor,soul}

\soulregister\cite7
\soulregister\ref7
\theoremstyle{plain}
\newtheorem{theorem}{Theorem}[section] % Main counter
    \newtheorem{lemma}[theorem]{Lemma} % Shares numbering with theorem
    \newtheorem{proposition}[theorem]{Proposition}
    \newtheorem{corollary}[theorem]{Corollary}
    \newtheorem{definition}[theorem]{Definition}
    \newtheorem{example}[theorem]{Example}
    \newtheorem{remark}[theorem]{Remark}
    
    \newtheorem{assumption}[theorem]{Assumption}

\DeclareMathOperator{\nnzr}{nnzr}

\DeclareMathOperator{\Circ}{Circ}

\newcommand{\one}{\mathbf{1}}

\title{Binary Cyclic Gradient Coding: Exact Recovery and Robustness to Compression}

\author{Hassan Ou-azzou, Imad Kissami, and El Houcine Bergou%
\thanks{The authors are with the College of Computing, Mohammed~VI Polytechnic University, Ben Guerir, Morocco (e-mail: hassan.ouazzou-ext@um6p.ma; imad.kissami@um6p.ma; elhoucine.bergou@um6p.ma).}}

\begin{document}
\maketitle

\begin{abstract}
This paper addresses the gradient coding problem in distributed optimization and coded computing. We introduce binary cyclic gradient coding (BCGC),  an exact gradient coding scheme for $n$ workers and up to $s$ stragglers,  with a binary circulant encoding matrix when $(s+1)\mid n$ and a nearly circulant matrix otherwise, enabling decoding by a single summation. Like FRC and GFRC, BCGC partitions workers into $s+1$ classes, but uses cyclic data assignments and is not permutation-equivalent to either scheme. We show that binary coding under compression is equivalent to uncoded compression on coarser data partitions, preserving convergence guarantees, while MDS codes can amplify compression errors through large decoding coefficients. Under compression, BCGC yields $s+1$ possible reconstructions when decoding the compressed gradients, against one for FRC and two for GFRC, and can select the one closest to the exact gradient, a choice unavailable to FRC and GFRC. Experiments on four datasets show that BCGC matches GFRC, remains stable under top-$\kappa$ sparsification while MDS codes diverge, and improves accuracy on heterogeneous data by reducing compression error.
\end{abstract}

\begin{IEEEkeywords}
Gradient coding, coded distributed computing, straggler mitigation, fractional repetition codes, cyclic codes, MDS codes, gradient compression.
\end{IEEEkeywords}

\section{Introduction} \label{Sec_Intro}
Distributed gradient descent splits the computation of gradients for large learning problems across several workers coordinated by a central master \cite{Bottou2018}. At large scale, its efficiency is mainly limited by two factors. The first is the communication of high dimensional gradients to the master. The second is the presence of stragglers, that is, workers that respond slowly or fail to respond.

During the process each worker sends a high dimensional real vector to the master at every iteration over bandwidth limited links \cite{NgSurvey2021,Xu2021GRACE}. This cost can be  reduced by compression, mainly quantization \cite{Alistarh2017QSGD}, sparsification \cite{Alistarh2018Topk,Lin2018DGC}, sketch based quantization \cite{Jiang2018SketchML} and low rank approximation \cite{Vogels2019PowerSGD}, often with error feedback, which reuses the discarded part of each message in later iterations \cite{Stich2018,Karimireddy2019EF,Richtarik2021EF21}. Stragglers are handled either by classical methods that add redundancy or move work at run time, such as task replication \cite{Wang2015Replication}, work stealing \cite{Blumofe1999} and work exchange \cite{Attia2017WorkExchange}, or by coding theory, which builds the redundancy into the data assignment before the computation, giving coded distributed computing (CDC) \cite{NgSurvey2021}. The master then recovers the result from the fastest workers with a worst case straggler guarantee, at the price of a higher computation load. CDC also reduces communication by encoding intermediate results \cite{Li2018Tradeoff} and offers fault tolerance, privacy \cite{Nodehi2018Privacy} and security \cite{Yu2018Lagrange}.

In this paper, we study how gradient coding interacts with the compression of the coded messages: the coding techniques handle the stragglers, and each worker compresses its coded message to reduce communication. Gradient coding, introduced by Tandon et al.\ \cite{Tandon2017}, partitions the training data into replicated chunks assigned to the workers by an encoding matrix, so that a decoder recovers the exact full gradient from the fastest workers, without knowing in advance which respond. A parallel line of work relaxes exact recovery, through sparse random graphs \cite{Charles2017}, block designs for adversarial stragglers \cite{Kadhe2019}, stochastic assignments \cite{Bitar2020}, approximate codes with optimal decoding \cite{Glasgow2021,Wang2019erasure},  and online adaptation to the straggler statistics \cite{KasHanna2020,KasHanna2022}. In addition others adapt the assignment to the observed stragglers \cite{GCDC2022}, and  also the idea is used to  speed up distributed matrix computations \cite{Lee2018,Dutta2020MatDot}.

Standard gradient coding does not reduce communication, since each responding worker sends a vector of dimension $d$ at every iteration. It can be reduced by increasing the computation load so that each worker sends a shorter coded vector \cite{Ye2018,Kadhe2020CommEff}, or by compressing the coded messages \cite{Li2024OneBit,Li2026Biased,SongChoi2026,Rudow2023Compression}. Compressing coded messages raises two difficulties. First, coded messages are linear combinations of partial gradients, whereas compression is generally nonlinear and can destroy the linear structure that decoding requires. Existing works handle this in specific settings, around a one bit quantizer \cite{Li2024OneBit}, biased compressors with error feedback \cite{Li2026Biased}, jointly designed approximate codes and quantization \cite{SongChoi2026}, or a Byzantine threat model \cite{LiCyclic2026}. Second, exact recovery constructions with minimal storage, such as the Gaussian MDS code of \cite{Tandon2017}, cyclic MDS codes \cite{Raviv2019}, Reed--Solomon codes \cite{Halbawi2018} and LDPC based schemes \cite{Maity2019LDPC}, decode with real or complex coefficients, which amplify rounding errors as the number of workers grows \cite{Charalambides2020Stable,RamamoorthyTang2022} and, under compression, amplify the compression errors (Proposition~\ref{prop:diff}). Binary gradient coding with decoding by summation removes the second difficulty. This is the principle of the fractional repetition coding (FRC) scheme of \cite{Tandon2017}, named after the fractional repetition codes for distributed storage \cite{ElRouayhebRamchandran2010}, and of its generalization GFRC \cite{Charalambides2025}, where the encoding matrix is binary and  each worker sends the sum of its partial gradients, so the master recovers the full gradient by summing the fastest workers without solving a linear system. 

We propose \emph{binary cyclic gradient coding} (BCGC), an exact scheme whose encoding matrix is binary and circulant, or nearly circulant when $(s+1)\nmid n$, and whose decoding reduces to a single summation. Here $n$ is the number of workers, indexed by $[n]:=\{0,\ldots,n-1\}$, $k$ the number of chunks, and $s$ the number of stragglers to tolerate. Following \cite{Charalambides2025}, we set $k=n$ and split the workers into $s+1$ classes by their index modulo $s+1$. Each worker receives a cyclic support of $s+1$ consecutive indices modulo $n$, the last worker of each class taking a support of adjusted length when $(s+1)\nmid n$. The workers of each class cover $[n]$ exactly once, so summing any one class gives the full gradient; since $s$ stragglers cannot meet all $s+1$ classes, at least one is always complete, and the scheme tolerates any $s$ stragglers at the recovery threshold $n-s$, which cannot be lowered, for every $1\le s\le n-1$ (the case $s=0$ being trivial, with $\mathbf{B}=I_n$).

Several features distinguish our construction. Like FRC and GFRC, and unlike cyclic MDS \cite{Raviv2019} and Reed--Solomon \cite{Halbawi2018} codes, BCGC is binary and decodes by summing one complete class, identified online in $O(1)$ per response as in GFRC \cite{Charalambides2025}, but its binary recovering sets are exactly its classes, whereas FRC and GFRC may combine workers of different classes (Remark~\ref{rem:cooperation-div}). BCGC is not permutation-equivalent to FRC or GFRC and attains the minimal total load deviation of GFRC with a larger maximum load. The same circulant support appears in \cite{Tandon2017} with real coefficients and in \cite{LiCyclic2026} with randomly reassigned rows and approximate aggregation, whereas here the assignment is fixed and the decoding exact; existing cyclic gradient codes are either real valued, hence numerically unstable and sensitive to compression, or binary with approximate decoding. Under a nonlinear compressor, binary decoding turns the decoded update into an uncoded compressed update on the partition formed by the supports of the decoded class, and for contractive compressors FRC, GFRC and BCGC share the standard guarantee of compressed gradient descent and differ only in these partitions.  The cyclic supports produce $s+1$ distinct partitions when $n\ge 2(s+1)$, against one for FRC and two for GFRC, so BCGC can select the reconstruction closest to the exact gradient, a choice unavailable to FRC and GFRC (Section~\ref{Sec_Compression}). Our comparison is with the standard block assignments. The general GFRC framework \cite{Charalambides2025} also permits per-class chunk permutations, so partition diversity is not excluded for every GFRC variant, but our explicit construction achieves it with a fixed assignment and exact decoding.

To summarize, our contributions are as follows.
\begin{itemize}
\item \textbf{Construction and exact recovery.} BCGC assigns the data through cyclic supports, giving a circulant encoding matrix when $(s+1)\mid n$ and a nearly circulant one otherwise, and decodes by a single summation over one complete class,  it tolerates any $s$ stragglers at the optimal threshold $n-s$ with the minimal total load $n(s+1)$.
\item \textbf{Algebraic structure.} The code spanned by the rows of the encoding matrix of the BCGC scheme has dimension $n-s$ and is cyclic when $(s+1)\mid n$; for $2\le s\le n-2$ it is not MDS, and the matrix is not permutation-equivalent to those of FRC and GFRC.
\item \textbf{Load distribution.} BCGC attains the minimal total load deviation of GFRC, keeping the load $s+1$ on $n-s-1$ workers, with every worker and chunk at load $s+1$ when $(s+1)\mid n$.
\item \textbf{Compatibility with compression.} First we prove  that any compressed binary scheme is exactly the uncoded compressed scheme on a coarser partition of the data, whose blocks are the supports of the decoded class. This keeps binary schemes stable, whereas the large decoding coefficients of MDS codes amplify the compression error. For every contractive compressor the three schemes share the same convergence guarantee, but BCGC induces $s+1$ distinct partitions (none refining another), against one for FRC and two for GFRC, so one of its classes can be exact when those of FRC and GFRC are not.
\item \textbf{Experiments.} On the Toubkal supercomputer with $n=100$ workers and four datasets (Amazon, MNIST, EMNIST and 20~Newsgroups), BCGC matches GFRC in robustness and accuracy and is faster than the Gaussian MDS and Reed--Solomon codes. Under top-$\kappa$ sparsification with $\kappa=\lfloor0.05\,d\rfloor$ it stays within $0.022$ of its uncompressed metric while those codes diverge, and on heterogeneous data, selecting the complete class with the smallest compression error improves accuracy over GFRC by up to $5$ points at the same cost.
\end{itemize}

The paper is organized as follows. Section~\ref{Sec_Prelim} recalls gradient coding and its binary variant. Section~\ref{Sec_divisible} presents BCGC in both the divisible and non-divisible cases, with its recovery guarantee, its algebraic structure, and its comparison with FRC and GFRC. Section~\ref{Sec_Compression} studies the effect of compression on binary gradient codes such as FRC, GFRC and BCGC. Section~\ref{Sec_Experiments} reports the experimental results, and Section~\ref{Sec_Conclusion} concludes the paper.

\section{Preliminaries}\label{Sec_Prelim}

In this section, we recall the general framework of gradient coding \cite{Tandon2017}, together with its binary variant, which will be used throughout the paper. For a positive integer $m$, we write $[m]:=\{0,1,\ldots,m-1\}$.

\subsection{Gradient coding}
Consider a distributed learning setting in which a master solves an optimization problem by gradient descent, with the computation distributed across $n$ workers. Given a dataset $\mathcal{D}$ of $N$ samples $(x,y)$, the goal is to learn a parameter $\boldsymbol{\beta}^*\in\mathbb{R}^{ d}$ minimizing the loss function
\[
L(\mathcal{D},\boldsymbol{\beta})=\sum_{(x,y)\in\mathcal{D}}\phi(x,y;\boldsymbol{\beta}),
\]
where $\phi$ is the per-sample loss.  We solve
\begin{equation}
\label{eq:erm}
\boldsymbol{\beta}^* = \arg\min_{\boldsymbol{\beta}\in\mathbb{R}^{ d}}
L(\mathcal{D}, \boldsymbol{\beta})
\end{equation}
by gradient descent, iterating the parameter update
\begin{equation}
\label{eq:update_param}
\boldsymbol{\beta}_{t+1} = \boldsymbol{\beta}_t - \gamma \, \mathbf{g},
\end{equation}
where $\gamma > 0$ is the learning rate, kept constant throughout the paper, and $\mathbf{g} = \nabla L(\mathcal{D}, \boldsymbol{\beta}_t)$ is the full gradient. Throughout, parameters and gradients are regarded as row vectors in $\mathbb{R}^{ d}$. To distribute this computation, we partition $\mathcal{D}$ into $k$ chunks $\mathcal{D}_0,\dots,\mathcal{D}_{k-1}$, of equal size in our experiments. We write the \emph{partial gradients} $\mathbf{g}_j := \nabla L(\mathcal{D}_j, \boldsymbol{\beta}_t)$, so that the full gradient decomposes as
\begin{equation}
\label{eq:full_grad}
\mathbf{g} = \sum_{j=0}^{k-1} \nabla L(\mathcal{D}_j, \boldsymbol{\beta}_t)
= \sum_{j=0}^{k-1} \mathbf{g}_j.
\end{equation}
The $n$ workers compute the partial gradients $\mathbf{g}_j$ where at most   $s$ of them are  \emph{stragglers} at each iteration. The master thus receives at least $n-s$ responses, indexed by a set $I$ of \emph{fastest workers}, and must recover $\mathbf{g}$ exactly from them. The \emph{recovery threshold} of a scheme is the smallest $K$ such that $\mathbf{g}$ can be recovered from any $K$ responses by some decoding function. A scheme tolerating any $s$ stragglers has threshold at most $n-s$.

To recover $\mathbf{g}$ from any $n-s$ responses, we construct an encoding matrix $\mathbf{B}\in\mathbb{R}^{n\times k}$, with rows $\mathbf{b}_0,\ldots,\mathbf{b}_{n-1}$, such that, for every set $I\subseteq[n]$ with $|I|\ge n-s$, there exists a \emph{decoding vector} $\mathbf{a}_I\in\mathbb{R}^{1\times n}$, with $\operatorname{supp}(\mathbf{a}_I)\subseteq I$, satisfying
\begin{equation}\label{eq:decode-condition}
\mathbf{a}_I \mathbf{B}=\one_{1\times k}.
\end{equation}
Here $\operatorname{supp}(\mathbf{x})$ denotes the set of indices of the nonzero entries of $\mathbf{x}$. The master sends worker $i$ the chunks indexed by $\operatorname{supp}(\mathbf{b}_i)$ with the coefficients $b_{i,j}$. At each iteration $t$ it broadcasts $\boldsymbol{\beta}_t$, and worker $i$ returns the encoded message
$$
\widehat{\mathbf{g}}_i:=\sum_{j\in\operatorname{supp}(\mathbf{b}_i)} b_{i,j}\,\mathbf{g}_j = \mathbf{b}_i \bar{\mathbf{g}},\qquad \bar{\mathbf{g}} = [\mathbf{g}_0^\top,\ldots,\mathbf{g}_{k-1}^\top]^\top.
$$
Once any set $I$ with $|I|\ge n-s$ has responded, the master recovers
\begin{equation}\label{eq:decode}
\sum_{i\in I}\mathbf{a}_I(i)\,\widehat{\mathbf{g}}_i=\mathbf{a}_I\mathbf{B}\bar{\mathbf{g}}=\one_{1\times k}\bar{\mathbf{g}}=\sum_{j=0}^{k-1}\mathbf{g}_j=\mathbf{g},
\end{equation}
and performs the update~\eqref{eq:update_param}.

\noindent The condition $\mathbf{a}_I \mathbf{B}=\one_{1\times k}$ is central to gradient coding. In \cite{Tandon2017}, this condition is characterized by the $\mathbf{B}$-span condition, which we recall in the following definition.

\begin{definition}[{\cite[Condition~1]{Tandon2017}}] \label{B-span}
A matrix $\mathbf{B}\in\mathbb{R}^{n\times k}$ satisfies the \emph{$\mathbf{B}$-span condition} for $s$ stragglers if
\[
\one_{1\times k} \in \operatorname{span}\{\mathbf{b}_i\mid i\in I\}, \qquad \forall\,I\subseteq[n],\ |I|=n-s .
\]
\end{definition}

\begin{remark}
In the literature, the $\mathbf{B}$-span condition is satisfied by several schemes: fractional 
repetition codes \cite{Tandon2017}, generalized fractional repetition codes~\cite{Charalambides2025}, cyclic MDS codes~\cite{Raviv2019}, and
Reed--Solomon codes~\cite{Halbawi2018}. These schemes encode the partial gradients and recover the gradient exactly. 
\end{remark}

\subsection{Binary gradient coding}\label{Sec_binary-gc}

We now recall the \emph{binary} gradient coding schemes of \cite{Tandon2017,Charalambides2025}, in which the encoding matrix is required to satisfy $\mathbf{B}\in\{0,1\}^{n\times k}$ and, for every $I\subseteq[n]$ with $|I|=n-s$, the decoding vector is required to lie in $\{0,1\}^{1\times n}$ with $\operatorname{supp}(\mathbf{a}_I)\subseteq I$. Restricting both to $\{0,1\}$ removes scalar multiplications from the protocol entirely. Each worker sums the partial gradients it has been assigned,
\[
\widehat{\mathbf{g}}_i=\mathbf{b}_i\,\bar{\mathbf{g}}
=\sum_{j\in\operatorname{supp}(\mathbf{b}_i)} \mathbf{g}_j ,
\]
and the master sums a subset of the received messages, $\mathbf{g}=\sum_{i\in\operatorname{supp}(\mathbf{a}_I)}\widehat{\mathbf{g}}_i$, with no linear system to solve online, avoiding the error amplification of large decoding coefficients. The identity $\mathbf{a}_I\mathbf{B}=\one_{1\times k}$ is understood over $\mathbb{R}$, so \emph{binary} refers to the $0/1$ entries of $\mathbf{B}$ and $\mathbf{a}_I$, not to arithmetic modulo~$2$.

A binary matrix satisfying the $\mathbf{B}$-span condition (Definition~\ref{B-span}) need not admit binary decoding vectors, since that condition only places $\one_{1\times k}$ in the \emph{real} span of $\{\mathbf{b}_i \mid i\in I\}$. The following sufficient condition, due to Charalambides et al.\ \cite{Charalambides2025}, guarantees binary decoding by exhibiting the decoding vector explicitly. For $K\subseteq[n]$, we write $\one_K\in\{0,1\}^{1\times n}$ for the indicator row vector of $K$.

\begin{proposition}[{\cite[Prop.~3, Cor.~4]{Charalambides2025}}]
\label{prop:partition-sum}
Let $\mathbf{B}\in\{0,1\}^{n\times k}$ and suppose that $[n]$ admits a partition into $s+1$ nonempty classes $K_0,\ldots,K_s$ such that
\begin{equation}\label{eq:class-cover}
\sum_{i\in K_j}\mathbf{b}_i=\one_{1\times k}, \qquad j=0,\ldots,s .
\end{equation}
Then, for every $I\subseteq[n]$ with $|I|=n-s$, there exists $j$ with $K_j\subseteq I$, and $\mathbf{a}_I:=\one_{K_j}$ satisfies $\mathbf{a}_I \mathbf{B}=\one_{1\times k}$.
\end{proposition}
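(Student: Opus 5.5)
The proof has two parts: a pigeonhole argument that produces a class $K_j$ contained in $I$, followed by a direct verification of the decoding identity.

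\emph{Existence of a complete class.} Let $I\subseteq[n]$ with $|I|=n-s$, and let $S:=[n]\setminus I$ be the set of stragglers, so $|S|=s$. The classes $K_0,\ldots,K_s$ partition $[n]$, so they are pairwise disjoint. Each straggler $i\in S$ therefore lies in exactly one class. The map sending $i\in S$ to the index of its class reaches at most $|S|=s$ of the $s+1$ indices. By pigeonhole, some index $j$ is not reached, which means $K_j\cap S=\emptyset$, or equivalently $K_j\subseteq I$. Nonemptiness of the classes is not needed here. It only ensures that the chosen decoding vector is nonzero and that the partition genuinely has $s+1$ parts.

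\emph{Verification of the decoding vector.} Set $\mathbf{a}_I:=\one_{K_j}$. Its entries lie in $\{0,1\}$, and $\operatorname{supp}(\mathbf{a}_I)=K_j\subseteq I$, so it is an admissible binary decoding vector. Expanding the row--matrix product gives
\[
\mathbf{a}_I\mathbf{B}=\sum_{i=0}^{n-1}\one_{K_j}(i)\,\mathbf{b}_i=\sum_{i\in K_j}\mathbf{b}_i=\one_{1\times k},
\]
where the last equality is hypothesis~\eqref{eq:class-cover}. This is exactly condition~\eqref{eq:decode-condition}.

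Neither step presents a real obstacle. The only point needing care is to invoke disjointness of the classes explicitly, since that is what lets $s$ stragglers meet at most $s$ classes. The statement also extends to $|I|\ge n-s$, because enlarging $I$ only makes the containment $K_j\subseteq I$ easier to satisfy.
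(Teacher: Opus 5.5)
Your proof is correct and is essentially the argument the paper relies on. The paper cites this result from Charalambides et al.\ without reproving it, but in Theorem~\ref{thm:char}(2) and Corollary~\ref{cor:topk} it uses the same two steps: at most $s$ stragglers meet at most $s$ of the $s+1$ disjoint classes, and then $\one_{K_j}\mathbf{B}=\sum_{i\in K_j}\mathbf{b}_i=\one_{1\times k}$.
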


We now recall the encoding of the \emph{fractional repetition code} (FRC) of Tandon et al.~\cite{Tandon2017}, a binary scheme in which each worker receives a fraction of the data, repeated across the $s+1$ classes. Assume $(s+1)\mid n$ and $k=n$, and write $\ell:=\tfrac{n}{s+1}$. A single class is encoded by the block matrix
\begin{equation}\label{eq:Bblock}
\mathbf{B}_{\text{block}}=I_{\ell}\otimes\one_{1\times(s+1)}\in\{0,1\}^{\ell\times n},
\end{equation}
whose $\ell$ rows have disjoint supports of size $s+1$ covering $[n]$. Stacking $s+1$ copies gives
\begin{equation}\label{eq:Bfrac}
\mathbf{B}_{\mathrm{FRC}}=
\begin{bmatrix}
\mathbf{B}_{\text{block}}\\
\vdots\\
\mathbf{B}_{\text{block}}
\end{bmatrix}
= \one_{(s+1)\times 1}\otimes \mathbf{B}_{\text{block}}\in\{0,1\}^{n\times n},
\end{equation}
whose classes are the consecutive blocks of $\ell$ rows. Each such class is a copy of $\mathbf{B}_{\text{block}}$ and therefore satisfies \eqref{eq:class-cover}, so Proposition~\ref{prop:partition-sum} applies and $\mathbf{B}_{\mathrm{FRC}}$ tolerates $s$ stragglers. Renumbering the worker $q$ of the class $j$ as $q(s+1)+j$ yields the permutation-equivalent encoding
\begin{equation}\label{eq:BFRCprime}
\mathbf{B}'_{\mathrm{FRC}}=I_{\ell}\otimes\one_{(s+1)\times(s+1)}\in\{0,1\}^{n\times n},
\end{equation}
which is block diagonal with $\ell$ all-ones blocks of size $(s+1)\times(s+1)$, and in which the classes are the congruence classes $K_j=\{\,i\in[n] \mid i\equiv j \pmod{s+1}\,\}$, each selecting one row from every diagonal block.

\section{Binary Cyclic Gradient Coding}
\label{Sec_divisible}

We now present the \emph{binary cyclic gradient coding} (BCGC) scheme, on which our results are based. BCGC combines two elements, the cyclic structure of cyclic codes and the binary encoding of FRC \cite{Tandon2017} and of its generalization GFRC \cite{Charalambides2025}. Its supports are those of the cyclic repetition scheme of \cite{Tandon2017}, but with binary coefficients instead of real ones and with decoding by the summation of one class instead of a linear system, and it does not use the cyclic MDS codes of \cite{Raviv2019}. As in FRC and GFRC, the workers are partitioned into $s+1$ classes, each of which collectively holds the entire dataset, so that summing the messages of any complete class recovers the full gradient. BCGC differs from FRC and GFRC in the choice of the supports: its cyclic supports make the encoding matrix circulant when $(s+1)\mid n$ and near-circulant otherwise, and for this choice we characterize exactly the binary recovering sets.

Following \cite{Tandon2017,Charalambides2025}, we set $k=n$ and index both the workers and the data chunks by $[n]$, so that the partial gradients are $\mathbf{g}_0,\ldots,\mathbf{g}_{n-1}$ and $\mathbf{g}=\sum_{j\in[n]}\mathbf{g}_j$. We write the Euclidean division of $n$ by $s+1$ as
\[
n=\ell(s+1)+r,\qquad \ell=\left\lfloor\frac{n}{s+1}\right\rfloor\ge1,\qquad 0\le r\le s .
\]
The \emph{divisible case} is $r=0$, that is $(s+1)\mid n$, and the \emph{non-divisible case} is $1\le r\le s$. The case $s=0$ is trivial, with $\mathbf{B}=I_n$.

\subsection{The encoding scheme}
As in \cite{Charalambides2025}, we partition the workers according to their indices modulo $s+1$. For $j\in\{0,\ldots,s\}$, the \emph{congruence class} of $j$ is
\[
K_j:=\{\,i\in[n]\mid i\equiv j\pmod{s+1}\,\}.
\]
We prove in Theorem~\ref{thm:char} that these classes are precisely the binary recovering sets of BCGC. Their sizes come in two values, as follows.

\begin{lemma}[Class sizes {\cite[eq.~(12)]{Charalambides2025}}]\label{lem:class-size}
For $j=0,\ldots,s$, $K_j=\{\,j+q(s+1)\mid 0\le q\le |K_j|-1\,\}$, with
\[
|K_j|=\ell+1\ \text{ if }\ 0\le j<r,\qquad |K_j|=\ell\ \text{ if }\ r\le j\le s .
\]
When $r=0$, all classes have size $\ell$. Moreover, $[n]=K_0\sqcup\cdots\sqcup K_s$.
\end{lemma}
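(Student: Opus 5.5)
The plan is to describe each class $K_j$ explicitly as an arithmetic progression in $[n]$ and count its terms using the Euclidean division $n=\ell(s+1)+r$. The disjoint-union statement will then follow from the fact that every integer has a unique residue modulo $s+1$.

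\textbf{Description of $K_j$.} Fix $j\in\{0,\ldots,s\}$. An integer $i\in[n]$ satisfies $i\equiv j \pmod{s+1}$ if and only if $i=j+q(s+1)$ for some integer $q$. Since $0\le j\le s$ and $i\ge 0$, we must have $q\ge0$. Hence $K_j=\{\,j+q(s+1)\mid q\ge0,\ j+q(s+1)\le n-1\,\}$. These elements are distinct for distinct $q$, so $|K_j|$ equals the number of admissible $q$. The admissible $q$ form an initial segment $\{0,\ldots,|K_j|-1\}$, which gives the stated form of $K_j$.

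\textbf{Counting.} This is the only step requiring care, namely handling the boundary index $q=\ell$. Substituting $n=\ell(s+1)+r$, the constraint $j+q(s+1)\le n-1$ becomes
\[
(q-\ell)(s+1)\le r-1-j .
\]
\begin{itemize}
\item For $q\le \ell-1$, the left side is at most $-(s+1)$. This is $\le r-1-j$ because $j\le s$ and $r\ge0$, so all $q\in\{0,\ldots,\ell-1\}$ are admissible.
\item For $q\ge \ell+1$, the left side is at least $s+1$. This exceeds $r-1-j\le s-1$, so no such $q$ is admissible.
\item For $q=\ell$, the condition reads $0\le r-1-j$, that is, $j<r$.
\end{itemize}
Therefore $|K_j|=\ell+1$ when $j<r$ and $|K_j|=\ell$ when $r\le j\le s$. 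When $r=0$ no $j$ satisfies $j<r$, so every class has size $\ell$. Note that $\ell\ge1$ guarantees each class is nonempty; this holds because $s\le n-1$.

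\textbf{Partition.} Every $i\in[n]$ has a unique residue in $\{0,\ldots,s\}$ modulo $s+1$. Hence the sets $K_0,\ldots,K_s$ are pairwise disjoint and their union is $[n]$, so $[n]=K_0\sqcup\cdots\sqcup K_s$. As a consistency check, the sizes sum to $r(\ell+1)+(s+1-r)\ell=n$.
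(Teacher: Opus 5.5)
Your proof is correct and follows essentially the paper's route: both write $K_j$ as the progression of $j+q(s+1)\le n-1$ and count the admissible $q$ using $n=\ell(s+1)+r$, with the paper packaging your three-way split on $q$ into the floor identity $\lfloor (n-1-j)/(s+1)\rfloor=\ell-1+\lfloor (r+s-j)/(s+1)\rfloor$. The only cosmetic difference is in the partition step: you obtain the covering of $[n]$ directly from the uniqueness of residues, while the paper deduces it from pairwise disjointness together with the size count $r(\ell+1)+(s+1-r)\ell=n$.
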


\begin{proof}
\begin{enumerate}
\item \emph{Size of $K_j$.} An element $i\in[n]$ lies in $K_j$ exactly when $i=j+q(s+1)$ with $0\le q\le (n-1-j)/(s+1)$, so
\[
|K_j|=\left\lfloor\frac{n-1-j}{s+1}\right\rfloor+1 .
\]
Writing $n-1-j=(\ell-1)(s+1)+(r+s-j)$ gives
\[
\left\lfloor\frac{n-1-j}{s+1}\right\rfloor=\ell-1+\left\lfloor\frac{r+s-j}{s+1}\right\rfloor .
\]
If $0\le j<r$, then $s+1\le r+s-j<2(s+1)$, so the last floor equals $1$ and $|K_j|=\ell+1$. If $r\le j\le s$, then $0\le r+s-j<s+1$, so the last floor equals $0$ and $|K_j|=\ell$.
\item \emph{Partition.} The classes are pairwise disjoint, since $0,\ldots,s$ are distinct modulo $s+1$, and their sizes add up to
\[
r(\ell+1)+(s+1-r)\ell=n ,
\]
so $[n]=K_0\sqcup\cdots\sqcup K_s$.
\end{enumerate}
\end{proof}

\noindent In the divisible case the classes are, moreover, cyclic shifts of one another, which underlies the circulant structure of $\mathbf{B}$ and the shift relation between the partitions used later.

\begin{proposition}\label{Prop_Kshift}
If $r=0$, then $K_{j+1}=K_j+1$ for $j=0,\ldots,s-1$, where $K_j+1:=\{\,(x+1)\bmod n\mid x\in K_j\,\}$.
\end{proposition}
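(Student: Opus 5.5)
We prove the set equality $K_{j+1}=K_j+1$ by two inclusions, using the explicit description of the classes from Lemma~\ref{lem:class-size}. Since $r=0$, every class has size $\ell$ and
\[
K_j=\{\,j+q(s+1)\mid 0\le q\le \ell-1\,\},
\]
with largest element $j+(\ell-1)(s+1)=j+n-(s+1)$.

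\emph{The shift never wraps.} For $0\le j\le s-1$, every $x\in K_j$ satisfies
\[
x\le j+n-(s+1)\le (s-1)+n-s-1=n-2 .
\]
Hence $x+1\le n-1$, so $(x+1)\bmod n=x+1$. Consequently
\[
K_j+1=\{\,(j+1)+q(s+1)\mid 0\le q\le \ell-1\,\},
\]
and $K_j+1$ has exactly $\ell$ distinct elements.

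\emph{Matching with $K_{j+1}$.} Since $j+1\le s$, Lemma~\ref{lem:class-size} with $r=0$ gives
\[
K_{j+1}=\{\,(j+1)+q(s+1)\mid 0\le q\le \ell-1\,\}.
\]
This is the same set as $K_j+1$, so $K_{j+1}=K_j+1$.

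Alternatively, one can show $K_j+1\subseteq K_{j+1}$, since each element $x+1$ lies in $[n]$ and is congruent to $j+1$ modulo $s+1$. Equality then follows because both sets have $\ell$ elements.

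The only delicate step is the wraparound check. For $j=s$ the last element of $K_s$ is $n-1$, which shifts to $0\in K_0$; this is why the statement stops at $j=s-1$. The bound $x\le n-2$ shows that the reduction modulo $n$ is trivial for every $j\le s-1$.
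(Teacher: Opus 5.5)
Your proof is correct, but it takes a more hands-on route than the paper's. The paper identifies $[n]$ with $\mathbb{Z}_n$ and notes that, since $(s+1)\mid n$, $H=\langle s+1\rangle$ is a subgroup of order $\ell$ and every class is a coset, $K_j=j+H$. It then concludes in one line that $K_j+1=(j+1)+H=K_{j+1}$.

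You instead work with integer representatives from Lemma~\ref{lem:class-size}. Your key step is the bound $x\le j+n-(s+1)\le n-2$, which shows that the reduction modulo $n$ does nothing, so the shifted set is literally $\{(j+1)+q(s+1)\mid 0\le q\le \ell-1\}$.

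Each approach has its advantage. Yours is fully elementary and shows exactly where $r=0$ enters, namely through the largest element of $K_j$ being $j+n-(s+1)$. The coset argument needs no wraparound check: with $(s+1)\mid n$, the residue modulo $s+1$ is well defined on $\mathbb{Z}_n$, so the shift acts on cosets whether or not an element wraps.

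The coset view also explains your closing remark better than you did. The relation does not break at $j=s$; it wraps cyclically, giving $K_s+1=K_0$, as your own observation $n-1\mapsto 0$ already suggests. The statement stops at $j=s-1$ only because $K_{s+1}$ is not defined, not because of the wraparound.
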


\begin{proof}
Identify $[n]$ with $\mathbb Z_n$. Since $r=0$, the set $H:=\langle s+1\rangle$ is a subgroup of $\mathbb Z_n$ of order $\ell$, and $K_j=j+H$ for every $j$. Therefore $K_j+1=(j+1)+H=K_{j+1}$.
\end{proof}

\noindent We now define the cyclic support $S_i$ of worker $i$, that is, the set of indices of the data chunks assigned to it. Within each class, the first workers receive a full cyclic support of size $s+1$, and the last worker receives the remaining indices.

\begin{definition}[Cyclic support]\label{Def_cyclic-support}
Let $j\in\{0,\ldots,s\}$, $m_j:=|K_j|$, and let $i_q=j+q(s+1)$, $0\le q\le m_j-1$, be the workers of $K_j$. The \emph{cyclic support} of worker $i_q$ is
\[
S_{i_q}:=\{\,(i_q+t)\bmod n\mid 0\le t\le L_{i_q}-1\,\},
\]
where
\[
L_{i_q}:=\begin{cases} s+1, & 0\le q\le m_j-2,\\ n-(m_j-1)(s+1), & q=m_j-1.\end{cases}
\]
Thus every worker receives $s+1$ consecutive indices, except the last worker of each class. When $r=0$, every $L_{i_q}=s+1$.
\end{definition}

\noindent We take as rows of $\mathbf{B}$ the indicator vectors $\mathbf{b}_i:=\one_{S_i}\in\{0,1\}^{1\times n}$. Since $L_i\le s+1+r\le n$, the elements $(i+t)\bmod n$, $0\le t\le L_i-1$, are pairwise distinct, $|S_i|=L_i$, and
\begin{equation}\label{eq:indicator}
b_{i,l}=1\iff l\in S_i\iff(l-i)\bmod n\le L_i-1 .
\end{equation}
The last worker of $K_j$ is $n-r+j$ if $j<r$ and $n-s-1-r+j$ if $j\ge r$, so the sizes of the cyclic supports are
\begin{equation}\label{eq:nd-Li}
L_i=
\begin{cases}
s+1, & 0\le i\le n-s-2,\\
s+1+r, & n-s-1\le i\le n-r-1,\\
r, & n-r\le i\le n-1 .
\end{cases}
\end{equation}

\noindent When $r=0$, all $L_i=s+1$ and $\mathbf{B}$ is the circulant matrix
\begin{equation}\label{MatrixB}
\mathbf{B}=\Circ\big(\one_{1\times(s+1)},\,\mathbf{0}_{1\times(n-s-1)}\big)\in\{0,1\}^{n\times n},
\end{equation}
where $\Circ(\mathbf{c})$ denotes the circulant matrix whose row $i$ is the right cyclic shift of $\mathbf{c}$ by $i$ positions, so that $\mathbf{b}_i$ is the cyclic shift of $\mathbf{b}_0=\one_{S_0}$ by $i$ positions. When $1\le r\le s$, the lengths $L_i$ are no longer constant and $\mathbf{B}$ splits into two horizontal blocks,
\begin{equation}\label{MatrixB-nd}
\mathbf{B}=\begin{pmatrix}\mathbf{B}_{1}\\[2pt] \mathbf{B}_{2}\end{pmatrix}\in\{0,1\}^{n\times n},
\end{equation}
where $\mathbf{B}_{1}\in\{0,1\}^{(n-s-1)\times n}$ collects the rows $\mathbf{b}_i$ with $0\le i\le n-s-2$, which are the first $n-s-1$ rows of the circulant matrix generated by $\mathbf{b}_0$, and $\mathbf{B}_{2}\in\{0,1\}^{(s+1)\times n}$ collects the rows of the last workers of the $s+1$ classes, of weight $s+1+r$ or $r$. We then call $\mathbf{B}$ \emph{near-circulant}, since it differs from $\Circ(\mathbf{b}_0)$ only in these $s+1$ rows, one for each class. Example~\ref{example1} displays $\mathbf{B}$ for $n=6$, $s=2$ (divisible), and Example~\ref{ex:nd} for $n=11$, $s=2$ (non-divisible).

\noindent The following result is the basis of our construction: for each class $K_j$, the supports $S_i$, $i\in K_j$, partition $[n]$.

\begin{proposition}\label{partition1}
For every $j\in\{0,\ldots,s\}$, the family $\{S_i\mid i\in K_j\}$ is a partition of $[n]$, i.e., $\sum_{i\in K_j}\mathbf{b}_i=\one_{1\times n}$.
\end{proposition}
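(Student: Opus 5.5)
The plan is to show that the supports of the workers in $K_j$ are consecutive, non-overlapping cyclic intervals whose lengths add up to $n$. Together these give a partition of $[n]$. Fix $j$, set $m:=m_j=|K_j|$, and let $i_q=j+q(s+1)$ for $0\le q\le m-1$. By Definition~\ref{Def_cyclic-support}, $S_{i_q}$ is the cyclic interval starting at $i_q$ of length $L_{i_q}$. For $q\le m-2$ this length is $s+1$, so the interval ends at $i_q+s=i_{q+1}-1$, just before $S_{i_{q+1}}$ begins. By induction on $q$, the union $S_{i_0}\cup\cdots\cup S_{i_q}$ is then the cyclic interval starting at $j$ of length $(q+1)(s+1)$. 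In particular, the union of the first $m-1$ supports is the interval $\{(j+t)\bmod n\mid 0\le t\le (m-1)(s+1)-1\}$.

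The last support $S_{i_{m-1}}$ starts at $i_{m-1}=j+(m-1)(s+1)$ and has length $n-(m-1)(s+1)$. Adding it to the union gives a single cyclic interval starting at $j$ of total length
\[
(m-1)(s+1)+\bigl(n-(m-1)(s+1)\bigr)=n,
\]
namely $\{(j+t)\bmod n\mid 0\le t\le n-1\}=[n]$. The lengths are listed consecutively along $t$ with no overlap in the index $t$, and $t$ ranges over exactly $n$ consecutive values, all distinct modulo $n$. Hence the supports are pairwise disjoint, and their union is $[n]$. For indicator vectors this says $\sum_{i\in K_j}\mathbf{b}_i=\one_{1\times n}$.

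The main obstacle is the bookkeeping for the last worker. I must check that $L_{i_{m-1}}\ge 1$, so that it is a genuine interval, and that the cumulative offset never exceeds $n$, so that no wraparound collision occurs. By Lemma~\ref{lem:class-size}, $m\in\{\ell,\ell+1\}$, so $n-(m-1)(s+1)$ equals either $s+1+r$ (when $m=\ell$) or $r$ (when $m=\ell+1$, which happens only for $j<r$, so $r\ge1$). In both cases it is at least $1$, and the total offset is exactly $n$, which settles the point. This matches \eqref{eq:nd-Li}. When $r=0$, every length is $s+1$ and $m=\ell$, and the argument reduces to $\ell(s+1)=n$.
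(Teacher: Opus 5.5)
Your proposal is correct and follows essentially the same route as the paper: each support of $K_j$ starts where the previous one ends ($i_q+L_{i_q}=i_{q+1}$, and the last one wraps to $j+n\equiv i_0$), and the lengths add up to $n$, so the supports cover $[n]$ exactly once. Your explicit check that the last length $n-(m_j-1)(s+1)\in\{r,\,s+1+r\}$ is at least $1$ makes precise a point the paper leaves implicit.
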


\begin{proof}
By Definition~\ref{Def_cyclic-support}, each $S_{i_q}$ is the cyclic interval of length $L_{i_q}$ starting at $i_q$.
\begin{enumerate}
\item \emph{Consecutive intervals.} The starting points satisfy
\[
i_q+L_{i_q}=i_{q+1}\quad(0\le q\le m_j-2),
\qquad
i_{m_j-1}+L_{i_{m_j-1}}=j+n\equiv i_0\pmod n .
\]
Hence each interval begins where the previous one ends, and $S_{i_0},\ldots,S_{i_{m_j-1}}$ wrap once around $[n]$.
\item \emph{Cover.} Their lengths sum to $n$, so they cover $[n]$ exactly once, that is, $\{S_i\}_{i\in K_j}$ partitions $[n]$. Since the $\mathbf{b}_i$ are binary, this is equivalent to
\[
\sum_{i\in K_j}\mathbf{b}_i=\one_{1\times n}.
\]
\end{enumerate}
\end{proof}

\noindent We next turn to the storage cost of the scheme and show that it is the least possible: each chunk is replicated on exactly $s+1$ workers, the minimum needed to tolerate $s$ stragglers.

\begin{proposition}\label{prop:balance}
Let $\mathbf{B}$ be the encoding matrix of the BCGC scheme. Then every column of $\mathbf{B}$ has weight $s+1$, and the number of nonzero entries $\nnzr(\mathbf{B})=n(s+1)$ is the minimal total load for a scheme tolerating $s$ stragglers. Moreover, when $r=0$, every row also has weight $s+1$.
\end{proposition}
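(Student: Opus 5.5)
The plan has three parts: the column weights follow from Proposition~\ref{partition1}, the total load follows by summing columns, and minimality follows from a straggler argument on each chunk.

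\emph{Column weights.} Fix a chunk index $l\in[n]$. By Proposition~\ref{partition1}, for each class $K_j$ the supports $\{S_i\mid i\in K_j\}$ partition $[n]$. Hence exactly one worker $i\in K_j$ has $l\in S_i$, that is, $b_{i,l}=1$. By Lemma~\ref{lem:class-size} the $s+1$ classes $K_0,\ldots,K_s$ partition $[n]$. Summing over $j$, the column $l$ therefore has exactly $s+1$ ones. Equivalently, summing the identities $\sum_{i\in K_j}\mathbf{b}_i=\one_{1\times n}$ over $j$ gives $\one_{1\times n}\mathbf{B}=(s+1)\one_{1\times n}$.

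\emph{Total load and minimality.} Summing the column weights gives $\nnzr(\mathbf{B})=n(s+1)$. For the lower bound, take any scheme (any real $\mathbf{B}$ with $k=n$) tolerating $s$ stragglers, and suppose some chunk $l$ is held by at most $s$ workers, i.e. column $l$ of $\mathbf{B}$ has at most $s$ nonzero entries. Let the stragglers be a set of $s$ workers containing all of those holders, and let $I$ be the remaining $n-s$ workers. Every row $\mathbf{b}_i$ with $i\in I$ has a zero in coordinate $l$, so every vector $\mathbf{a}_I\mathbf{B}$ has a zero in coordinate $l$ and cannot equal $\one_{1\times n}$. This contradicts \eqref{eq:decode-condition}. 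So every column has weight at least $s+1$, and $\nnzr(\mathbf{B})\ge n(s+1)$. This is the one step that needs care: the argument must be phrased over arbitrary real encoding matrices, not only binary ones, so that the bound is a genuine minimum over all schemes. It also uses $k=n$, the convention fixed earlier in the paper.

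\emph{Row weights when $r=0$.} If $r=0$, Definition~\ref{Def_cyclic-support} gives $L_i=s+1$ for every worker $i$, since $n-(m_j-1)(s+1)=n-(\ell-1)(s+1)=s+1$. By \eqref{eq:indicator}, $|S_i|=L_i$, so every row of $\mathbf{B}$ has weight $s+1$. This can also be read off directly from the circulant form \eqref{MatrixB}.
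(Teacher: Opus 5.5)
Your proof is correct and follows the paper's proof essentially step for step. Like the paper, you get the column weights from Proposition~\ref{partition1} together with the disjointness of the classes, the lower bound from the requirement that each chunk be held by at least $s+1$ workers, and the row weights from $L_i=s+1$ when $r=0$. The only difference is that you justify the lower bound explicitly through the decoding condition \eqref{eq:decode-condition}: if a chunk $l$ has at most $s$ holders and all of them straggle, then coordinate $l$ of $\mathbf{a}_I\mathbf{B}$ is zero, whereas the paper simply asserts this step.
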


\begin{proof}
\begin{enumerate}
\item \emph{Column weights.} By Proposition~\ref{partition1}, the supports of each class partition $[n]$, so each chunk lies in exactly one support of each of the $s+1$ classes. Hence every column of $\mathbf{B}$ has weight $s+1$, and
\[
\nnzr(\mathbf{B})=n(s+1).
\]
\item \emph{Minimality.} To tolerate $s$ stragglers, each chunk must be stored on at least $s+1$ workers, so $\nnzr(\mathbf{B})\ge n(s+1)$ for every such scheme. Thus $n(s+1)$ is the minimal total load.
\item \emph{Row weights when $r=0$.} Here \eqref{eq:nd-Li} gives $L_i=s+1$ for every $i$, so every row of $\mathbf{B}$ also has weight $s+1$.
\end{enumerate}
\end{proof}

\noindent The following theorem shows that the full gradient is recovered by summing the responses of a set of workers if and only if that set is a single class $K_j$. This is a restriction compared with FRC and GFRC \cite{Tandon2017,Charalambides2025}, where workers of different classes may also be combined, so that those schemes have more binary recovering sets.

\begin{theorem}\label{thm:char}
Let $\mathbf{B}$ have rows $\mathbf{b}_i=\one_{S_i}$ as above, with $L_i=|S_i|$ given by \eqref{eq:nd-Li}.
\begin{enumerate}
\item For every $R\subseteq[n]$, the sum being taken over $\mathbb{R}$,
\begin{align*}
\sum_{i\in R}\mathbf{b}_i=\one_{1\times n}
&\iff\{S_i\}_{i\in R}\ \text{is a partition of}\ [n]\\
&\iff R=K_j\ \text{for some}\ j .
\end{align*}
\item If $I\subseteq[n]$ with $|I|\ge n-s$, then $K_{j^*}\subseteq I$ for some $j^*\in\{0,\ldots,s\}$, and, with $\mathbf{a}_I=\one_{K_{j^*}}$,
\[
\mathbf{a}_I\mathbf{B}=\one_{1\times n},\qquad \sum_{i\in K_{j^*}}\sum_{l\in S_i}\mathbf{g}_l=\mathbf{g} .
\]
The scheme tolerates any $s$ stragglers, and its recovery threshold equals $n-s$.
\end{enumerate}
\end{theorem}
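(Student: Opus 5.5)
The plan is to prove the two equivalences of part~1 in turn, and then derive part~2 from part~1 together with Proposition~\ref{prop:partition-sum} and the column weights of Proposition~\ref{prop:balance}.

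\textbf{First equivalence.} The equivalence $\sum_{i\in R}\mathbf{b}_i=\one_{1\times n}\iff\{S_i\}_{i\in R}$ partitions $[n]$ is immediate. The $l$-th entry of $\sum_{i\in R}\mathbf{b}_i$ counts the $i\in R$ with $l\in S_i$. That count equals $1$ for every $l$ exactly when the supports are pairwise disjoint and cover $[n]$.

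\textbf{Second equivalence.} The direction $R=K_j\Rightarrow$ partition is Proposition~\ref{partition1}. For the converse, I will introduce the successor map $f:[n]\to[n]$, $f(i):=(i+L_i)\bmod n$, which sends a worker to the point just after the end of its cyclic interval. The first step of the proof of Proposition~\ref{partition1} says exactly that $f(i_q)=i_{q+1}$ for $q\le m_j-2$ and $f(i_{m_j-1})=i_0$. Hence $f$ is a permutation whose cycles are precisely the classes $K_0,\ldots,K_s$.

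Now assume $\{S_i\}_{i\in R}$ partitions $[n]$ and pick any $i\in R$.
\begin{enumerate}
\item If $L_i=n$, then $S_i=[n]$ and disjointness forces $R=\{i\}$. By \eqref{eq:nd-Li} this happens only when $\ell=1$, and then $i$ is the unique member of a class of size one (a class with $m_j=1$, for which $L_i=n$). So $R$ is that class.
\item Otherwise $L_i<n$. The point $p=f(i)$ is not in $S_i$, but $p-1\bmod n$ is. Let $S_{i'}$, $i'\in R$, be the support that covers $p$. Since $S_{i'}$ is disjoint from $S_i$ and is a cyclic interval of length less than $n$ containing $p$ but not $p-1$, it must start at $p$, so $i'=p=f(i)$. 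Thus $R$ is closed under $f$, and $R$ contains the whole $f$-orbit of $i$, which is its class $K_j$.
\end{enumerate}
In the second case, Proposition~\ref{partition1} shows that $\{S_{i'}\}_{i'\in K_j}$ already covers $[n]$. Any further element of $R$ would have a nonempty support overlapping one of these, so $R=K_j$.

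\textbf{Part 2.} If $|I|\ge n-s$, at most $s$ workers are missing. By Lemma~\ref{lem:class-size} these missing workers cannot meet all $s+1$ disjoint classes, so some $K_{j^*}\subseteq I$. Proposition~\ref{partition1} (equivalently Proposition~\ref{prop:partition-sum}) then gives $\one_{K_{j^*}}\mathbf{B}=\one_{1\times n}$. Applying this to $\bar{\mathbf{g}}$ as in \eqref{eq:decode} yields the displayed double sum equal to $\mathbf{g}$, so the threshold is at most $n-s$.

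For the matching lower bound, take any chunk $l$. By Proposition~\ref{prop:balance} it is stored on exactly $s+1$ workers. If those $s+1$ workers straggle, the remaining $n-s-1$ responses do not depend on $\mathbf{g}_l$, so $\mathbf{g}$ cannot be determined by any decoding function. Hence the recovery threshold is exactly $n-s$.

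\textbf{Main obstacle.} The delicate step is the closure argument under $f$: showing that the support covering $f(i)$ must start exactly at $f(i)$. This requires excluding a wrap-around interval, which is why the degenerate length-$n$ case has to be treated separately.
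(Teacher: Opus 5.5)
Your proposal is correct and follows essentially the same route as the paper: the first equivalence by counting covers of each coordinate, the converse by showing $R$ is closed under $i\mapsto i+L_i$ (the paper's relation $a\in R\implies a+L_a\in R$, obtained by the same argument that the support covering $a+L_a$ must start there, with the length-$n$ case treated separately), and part~2 by pigeonhole plus the column-weight argument for the lower bound on the threshold. Viewing the successor map as a permutation whose cycles are the classes is only a cleaner restatement of the paper's iteration along $i_q+L_{i_q}=i_{q+1}$.
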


\begin{proof}
Throughout, indices are taken modulo $n$ and $[n]$ is identified with $\mathbb Z_n$. For $R\subseteq[n]$ and $l\in[n]$, let
\[
n_R(l):=|\{i\in R\mid l\in S_i\}| ,
\]
the $l$-th coordinate of $\sum_{i\in R}\mathbf{b}_i$. Then
\[
\sum_{i\in R}\mathbf{b}_i=\one_{1\times n}
\iff n_R(l)=1\ \text{for all}\ l
\iff \{S_i\}_{i\in R}\ \text{partitions}\ [n],
\]
which is the first equivalence in part~(1).

\textit{Part (1): $\{S_i\}_{i\in R}$ partitions $[n]$ if and only if $R=K_j$ for some $j$.}
\begin{enumerate}
\item If $R=K_j$, this holds by Proposition~\ref{partition1}.
\item Conversely, assume $n_R(l)=1$ for all $l$ and fix $a\in R$. We consider two cases.

\emph{Case $L_a=n$.} Then $S_a=[n]$, so $R=\{a\}$. Moreover $L_a=n$ forces $\ell=1$ with $a$ the unique worker of its class, so $R$ is a class.

\emph{Case $L_a<n$.} Then $a+L_a\notin S_a$, so the unique $i\in R$ with $a+L_a\in S_i$ satisfies $i\ne a$. If $i\ne a+L_a$, then $a+L_a=i+t$ with $1\le t\le L_i-1$, hence $a+L_a-1\in S_i\cap S_a$ and $n_R(a+L_a-1)\ge2$, contradicting $n_R\equiv1$. Therefore $i=a+L_a$, which proves
\begin{equation}\label{eq:nd-succ}
a\in R\implies a+L_a\in R .
\end{equation}
Let $K_{j^*}=\{i_0,\ldots,i_{m_{j^*}-1}\}$ be the class of $a$. Since
\[
i_q+L_{i_q}=i_{q+1}\ (q\le m_{j^*}-2),
\qquad
i_{m_{j^*}-1}+L_{i_{m_{j^*}-1}}\equiv i_0\pmod n,
\]
iterating \eqref{eq:nd-succ} from $a$ yields $K_{j^*}\subseteq R$. The supports of $K_{j^*}$ already partition $[n]$, so any element of $R\setminus K_{j^*}$ would cover some index twice. Hence $R=K_{j^*}$.
\end{enumerate}

\textit{Part (2): straggler tolerance and threshold.}
\begin{enumerate}
\item \emph{Recovery from $n-s$ responses.} Let $T=[n]\setminus I$, so $|T|\le s$. The $s+1$ classes are disjoint, so $T$ meets at most $s$ of them, and by the pigeonhole principle $K_{j^*}\subseteq I$ for some $j^*$. By part~(1),
\[
\sum_{i\in K_{j^*}}\sum_{l\in S_i}\mathbf{g}_l=\sum_{l\in[n]}\mathbf{g}_l=\mathbf{g} .
\]
\item \emph{Optimality of the threshold.} The threshold cannot be lowered, even by real-valued or nonlinear decoding. By Proposition~\ref{prop:balance}, each chunk $l$ belongs to exactly $s+1$ supports. If these $s+1$ workers straggle, the $n-s-1$ received messages do not depend on $\mathbf{g}_l$, so $\mathbf{g}$ cannot be recovered. Hence the recovery threshold equals $n-s$.
\end{enumerate}
\end{proof}

\noindent\textbf{Decoding.} Since $\mathbf{B}$ is binary, no linear system needs to be solved: by Theorem~\ref{thm:char} it suffices to find a complete class and to sum the messages of its workers. Algorithm~\ref{alg:dec} processes the responses in their order of arrival, counts those received from each class, and stops as soon as one class is complete.

\begin{algorithm}[H]
\caption{Decoding by complete class}\label{alg:dec}
\begin{algorithmic}[1]
\State \textbf{Input} parameters $n$, $s$ with $n=\ell(s+1)+r$, and the messages $\widehat{\mathbf{g}}_{i_1},\widehat{\mathbf{g}}_{i_2},\ldots$ of the responding workers, in their order of arrival
\State \textbf{Output} a complete class $K_{j^*}$, used to recover $\mathbf{g}=\sum_{i\in K_{j^*}}\widehat{\mathbf{g}}_i$
\State $c_j\gets0$, and $m_j\gets\ell+1$ if $j<r$, $m_j\gets\ell$ otherwise, for $j=0,\ldots,s$ \Comment{$m_j=|K_j|$}
\For{$t=1$ \textbf{to} $n-s$}
  \State $j\gets i_t\bmod(s+1)$; \quad $c_j\gets c_j+1$ \Comment{the class of worker $i_t$}
  \If{$c_j=m_j$}
    \State \Return $K_j=\{\,j+q(s+1)\mid q=0,\ldots,m_j-1\,\}$
  \EndIf
\EndFor
\end{algorithmic}
\end{algorithm}

\noindent\textbf{Correctness and complexity.} The workers $i_t$ are distinct and $|K_j|=m_j$, so $c_j=m_j$ exactly when all workers of $K_j$ have responded, and line~7 returns the first class to complete. After $n-s$ responses, at most $s$ workers are missing, meeting at most $s$ of the $s+1$ disjoint classes, so some class is complete and the loop always returns (Theorem~\ref{thm:char}(2)), with $\sum_{i\in K_{j^*}}\widehat{\mathbf{g}}_i=\mathbf{g}$ by Theorem~\ref{thm:char}(1). Each response costs $O(1)$, and the final sum of $m_{j^*}\le\ell+1$ messages of dimension $d$ costs $O((\ell+1)d)$, with no decoding matrix stored. This decoder is used in the experiments of Section~\ref{Sec_Experiments}.

\begin{example}[Divisible case, $n=6$, $s=2$]\label{example1}
Here $\ell=2$, $r=0$, $K_0=\{0,3\}$, $K_1=\{1,4\}$, $K_2=\{2,5\}$, and $S_i=\{i,i+1,i+2\}\pmod 6$. Then

$
\mathbf{B}=\Circ(1,1,1,0,0,0)=
\begin{pmatrix}
1&1&1&0&0&0\\ 0&1&1&1&0&0\\ 0&0&1&1&1&0\\
0&0&0&1&1&1\\ 1&0&0&0&1&1\\ 1&1&0&0&0&1
\end{pmatrix},
$
and each class covers $[6]$ exactly once, $\mathbf{b}_0+\mathbf{b}_3=\mathbf{b}_1+\mathbf{b}_4=\mathbf{b}_2+\mathbf{b}_5=\one_{1\times 6}$, yielding the three distinct data partitions $\{\{0,1,2\},\{3,4,5\}\}$, $\{\{1,2,3\},\{4,5,0\}\}$ and $\{\{2,3,4\},\{5,0,1\}\}$. If workers $0$ and $1$ straggle, the class $K_2=\{2,5\}$ is complete and the master recovers $(\mathbf{g}_2+\mathbf{g}_3+\mathbf{g}_4)+(\mathbf{g}_5+\mathbf{g}_0+\mathbf{g}_1)=\mathbf{g}$; if instead workers $1$ and $2$ straggle, the complete class is $K_0=\{0,3\}$, giving $(\mathbf{g}_0+\mathbf{g}_1+\mathbf{g}_2)+(\mathbf{g}_3+\mathbf{g}_4+\mathbf{g}_5)=\mathbf{g}$.
\end{example}

\begin{example}[Non-divisible case, $n=11$, $s=2$]\label{ex:nd}
Here $\ell=3$, $r=2$, and $K_0=\{0,3,6,9\}$, $K_1=\{1,4,7,10\}$, $K_2=\{2,5,8\}$. The first workers of each class receive $s+1=3$ consecutive indices and the last receives the remaining ones; for $K_2$, the supports are $S_2=\{2,3,4\}$, $S_5=\{5,6,7\}$ and $S_8=\{8,9,10,0,1\}$. Hence
\[
\mathbf{B} =
\left(
\begin{array}{ccccccccccc}
1&1&1&0&0&0&0&0&0&0&0\\
0&1&1&1&0&0&0&0&0&0&0\\
0&0&1&1&1&0&0&0&0&0&0\\
0&0&0&1&1&1&0&0&0&0&0\\
0&0&0&0&1&1&1&0&0&0&0\\
0&0&0&0&0&1&1&1&0&0&0\\
0&0&0&0&0&0&1&1&1&0&0\\
0&0&0&0&0&0&0&1&1&1&0\\
1&1&0&0&0&0&0&0&1&1&1\\
0&0&0&0&0&0&0&0&0&1&1\\
1&0&0&0&0&0&0&0&0&0&1
\end{array}
\right),
\]
whose first $n-s-1=8$ rows are consecutive cyclic shifts of $\mathbf{b}_0=\one_{\{0,1,2\}}$ and form the block $\mathbf{B}_1$, while the last three rows, the block $\mathbf{B}_2$ of the workers $8,9,10$, have weights $5$, $2$ and $2$. Every column has weight $3$, and the rows of each class sum to $\one_{1\times 11}$; for instance $\mathbf{b}_2+\mathbf{b}_5+\mathbf{b}_8=\one_{1\times 11}$. If workers $0$ and $1$ straggle, the class $K_2=\{2,5,8\}$ is complete and the master recovers $\mathbf{g}=(\mathbf{g}_2+\mathbf{g}_3+\mathbf{g}_4)+(\mathbf{g}_5+\mathbf{g}_6+\mathbf{g}_7)+(\mathbf{g}_8+\mathbf{g}_9+\mathbf{g}_{10}+\mathbf{g}_0+\mathbf{g}_1)$. The threshold $n-s=9$ cannot be lowered: if workers $0$, $1$ and $2$ straggle, chunk $2$ is held only by them, so no decoder, even real-valued or nonlinear, recovers $\mathbf{g}$.
\end{example}

\begin{remark}\label{rem:cooperation-div}
By Theorem~\ref{thm:char}(1), binary recovery in BCGC requires all the workers of a single class. In $\mathbf{B}'_{\mathrm{FRC}}=I_{\ell}\otimes\one_{(s+1)\times(s+1)}$, by contrast, the workers of each diagonal block share the same support, so one worker per block suffices, even from incomplete classes. Thus FRC has $(s+1)^\ell$ binary recovering sets, against $s+1$ for BCGC, and decodes from more straggler patterns; in return, the $s+1$ classes of BCGC induce $s+1$ distinct data partitions, whereas all the recovering sets of FRC induce the same one and hence return the same compressed update (Proposition~\ref{prop:diversity}).
\end{remark}

\subsection{Algebraic structure and comparison with FRC and GFRC}\label{Sec_comp-frc}

We describe the code $\mathcal C_B\subseteq\mathbb R^{n}$ spanned by the rows of $\mathbf{B}$ over $\mathbb R$, the field in which decoding takes place, and compare $\mathbf{B}$ with the encoding matrices of FRC \cite{Tandon2017} and GFRC \cite{Charalambides2025}. We identify $\mathbf{c}=(c_0,\ldots,c_{n-1})\in\mathbb R^{n}$ with the polynomial $c(x)=\sum_{j}c_jx^j$ in $\mathbb R[x]/(x^n-1)$, and write $\mathbf{e}_l:=\one_{\{l\}}$. Two matrices are \emph{permutation-equivalent} if one is obtained from the other by permuting rows and columns; such permutations preserve the rank, so matrices of different ranks are not permutation-equivalent.

\noindent We first determine the dimension of $\mathcal C_B$, its cyclic structure in the divisible case, and whether it is MDS; the rank will then separate $\mathbf{B}$ from the FRC and GFRC matrices.

\begin{proposition}\label{prop:code}
We have $\dim\mathcal C_B=\operatorname{rank}\mathbf{B}=n-s$ and $\one_{1\times n}\in\mathcal C_B$. When $r=0$, $\mathcal C_B$ is the cyclic code with generator polynomial $b_0(x)=1+x+\cdots+x^{s}$. Moreover, $\mathcal C_B$ is not MDS, except when $r=0$ and $s\in\{0,1,n-1\}$.
\end{proposition}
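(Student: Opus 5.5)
The plan is to bound the rank from above by exhibiting $s$ independent linear relations among the rows, and from below by a triangular argument; then to read off the cyclic structure from the circulant form, and to settle MDS by comparing the minimum distance with the Singleton bound $d=n-(n-s)+1=s+1$.

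\emph{Membership of $\one$ and the upper bound on the rank.} By Proposition~\ref{partition1}, $\sum_{i\in K_j}\mathbf{b}_i=\one_{1\times n}$ for every $j$, so $\one_{1\times n}\in\mathcal C_B$. Subtracting these identities gives the $s$ relations $\one_{K_0}\mathbf{B}-\one_{K_j}\mathbf{B}=\mathbf 0$ for $j=1,\ldots,s$. The vectors $\one_{K_0}-\one_{K_j}\in\mathbb R^{1\times n}$ lie in the left kernel of $\mathbf{B}$. They are linearly independent, because each $K_j$ with $j\ge1$ is nonempty (Lemma~\ref{lem:class-size}) and appears in exactly one of them. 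Hence $\operatorname{rank}\mathbf{B}\le n-s$.

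\emph{Lower bound on the rank.} By \eqref{eq:nd-Li}, for $0\le i\le n-s-2$ we have $S_i=\{i,\ldots,i+s\}$ with no wrap-around, since $i+s\le n-2$.
\begin{itemize}
\item These $n-s-1$ rows are in echelon form: row $i$ has smallest index $i$. So they are independent.
\item All of them vanish in column $n-1$.
\end{itemize}
It remains to show that $\one_{1\times n}$ is not in their span. Suppose $\one=\sum_{i\le n-s-2}c_i\mathbf{b}_i$. Then column $n-1$ would read $0=1$, a contradiction. This is even simpler than solving for the $c_i$ triangularly. Thus $\operatorname{rank}\mathbf{B}\ge n-s$, and therefore $\dim\mathcal C_B=n-s$.

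\emph{Cyclic structure when $r=0$.} By \eqref{MatrixB}, the rows are $x^i b_0(x)$, so $\mathcal C_B$ is the ideal $(b_0(x))$ in $\mathbb R[x]/(x^n-1)$. Since $(s+1)\mid n$, we have
\[
b_0(x)=\frac{x^{s+1}-1}{x-1},
\]
which divides $x^n-1$. So $b_0$ is the generator polynomial. Its degree $s$ matches the dimension $n-s$.

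\emph{MDS.} The code is MDS iff its minimum distance is $s+1$, so the task is to exhibit a short codeword or rule one out in each case.
\begin{itemize}
\item \emph{Non-divisible case $1\le r\le s$.} By \eqref{eq:nd-Li}, the row $\mathbf{b}_{n-1}$ has weight $r\le s$. So the code is not MDS.
\item \emph{Divisible case, $2\le s\le n-2$.} Since $s+1$ is a proper divisor of $n$, we get $n\ge2(s+1)$, hence $s\le n-3$. Then $\mathbf{b}_0,\mathbf{b}_1$ are full shifts, and $\mathbf{b}_0-\mathbf{b}_1=\mathbf e_0-\mathbf e_{s+1}$ has weight $2\le s$. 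So the code is not MDS.
\item \emph{Exceptions with $r=0$.}
\begin{itemize}
\item For $s=0$ the code is $\mathbb R^n$, which is MDS.
\item For $s=n-1$ the code is $\operatorname{span}(\one)$, with distance $n=s+1$.
\item For $s=1$, $n$ is even and $\mathcal C_B=(1+x)$, i.e.\ the codewords $c$ with $c(-1)=0$. A weight-one vector $\mathbf e_l$ has $e_l(-1)=\pm1\neq0$, so it is not a codeword. Hence $d=2=s+1$.
\end{itemize}
\end{itemize}

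The step requiring the most care is the case bookkeeping in the MDS claim: checking that the non-divisible short rows and the weight-two difference together cover every non-exceptional $(n,s)$. The $s=1$ case with odd $n$ falls under $r\neq0$.
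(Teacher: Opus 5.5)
Your proof is correct and follows the paper's argument: the same echelon rows plus $\one_{1\times n}$ for the lower bound, the same left-kernel vectors $\one_{K_0}-\one_{K_j}$ for the upper bound, the same factorization of $x^n-1$, and a short codeword compared against the Singleton bound $s+1$. The differences are minor. In the non-divisible case you use the row $\mathbf{b}_{n-1}$ of weight $r\le s$ for every $r$, where the paper splits into $r=1$ and $r\ge2$, and you check via $c(-1)=0$ that the $s=1$ exception is MDS, which the paper only asserts.
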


\begin{proof}
First we start  by computing the rank of $B.$ For $0\le i\le n-s-2$, \eqref{eq:nd-Li} gives $L_i=s+1$, so $\mathbf{b}_i=\one_{\{i,\ldots,i+s\}}$ with $i+s\le n-2$. These $n-s-1$ rows are in echelon form, hence linearly independent, and each has last coordinate $0$. By Proposition~\ref{partition1},
\[
\one_{1\times n}=\sum_{i\in K_0}\mathbf{b}_i\in\mathcal C_B ,
\]
and its last coordinate is $1$, so adjoining it to the $n-s-1$ rows gives $n-s$ independent vectors and $\operatorname{rank}\mathbf{B}\ge n-s$. For the reverse bound, set
\[
\mathbf{v}_j:=\one_{K_0}-\one_{K_j},\qquad j=1,\ldots,s .
\]
By Proposition~\ref{partition1}, $\mathbf{v}_j\mathbf{B}=\mathbf 0$, and the $\mathbf{v}_j$ are independent because the classes are disjoint and nonempty. Thus the left kernel of $\mathbf{B}$ has dimension at least $s$, so $\operatorname{rank}\mathbf{B}\le n-s$. The two bounds give $\operatorname{rank}\mathbf{B}=n-s$.

Next we prove that the code $ \mathcal C_B$ is cyclic when $r=0$. Each row $\mathbf{b}_i$ is the cyclic shift of $\mathbf{b}_0$ by $i$ positions, so $b_i(x)=x^ib_0(x)$ in $\mathbb R[x]/(x^n-1)$, and $\mathcal C_B$ is the cyclic code generated by $b_0(x)$. From
\[
x^n-1=(x-1)\,b_0(x)\sum_{l=0}^{\ell-1}x^{l(s+1)},
\]
$b_0(x)$ is a monic divisor of $x^n-1$, hence its generator polynomial, and $\dim\mathcal C_B=n-\deg b_0=n-s$.
Finally we prove that  $ \mathcal C_B$ is not MDS.  By the Singleton bound, $d_{\min}(\mathcal C_B)\le n-\dim\mathcal C_B+1=s+1$, with equality if and only if $\mathcal C_B$ is MDS. It suffices to exhibit a codeword of weight less than $s+1$:
\begin{itemize}
\item if $r=1$, then $L_{n-1}=1$, so $\mathbf{b}_{n-1}=\mathbf{e}_{n-1}$ has weight $1$;
\item if $r\ge2$ (so $s\ge2$ and $n-s-1\ge2$), or if $r=0$ and $2\le s\le n-2$, then $\mathbf{b}_0-\mathbf{b}_1=\mathbf{e}_0-\mathbf{e}_{s+1}$ has weight $2<s+1$.
\end{itemize}
In the remaining cases $r=0$ and $s\in\{0,1,n-1\}$, the code is the trivial $[n,n,1]$ code ($s=0$), the $[n,n-1,2]$ code ($s=1$), or the $[n,1,n]$ repetition code ($s=n-1$), all of which are MDS.
\end{proof}

\noindent We now recall the encoding of GFRC \cite{Charalambides2025}, which extends FRC to the non-divisible case. It uses one partition of $[n]$ for the $r$ classes of size $\ell+1$ and another for the $s+1-r$ classes of size $\ell$.

\begin{definition}[GFRC encoding matrix]\label{def:gfrc}
Write $n=\lambda(\ell+1)+\tilde r$ with $0\le\tilde r\le\ell$, and $n=\mu\ell+u$ with $0\le u\le\ell-1$. For $j\in\{0,\ldots,s\}$ and $i_q=j+q(s+1)\in K_j$, let
\[
L'_{i_q}:=
\begin{cases}
\lambda+1, & j<r,\ q<\tilde r,\\
\lambda, & j<r,\ q\ge\tilde r,\\
\mu+1, & j\ge r,\ q<u,\\
\mu, & j\ge r,\ q\ge u,
\end{cases}
\qquad
S'_{i_q}:=\{\alpha'_{i_q},\ldots,\alpha'_{i_q}+L'_{i_q}-1\},
\]
where $\alpha'_{i_q}:=\sum_{l=0}^{q-1}L'_{i_{l}}$, so that the first worker $i_0=j$ of each class receives a support starting at chunk $0$.
\end{definition}

\noindent The lengths of each class sum to $n$, so the supports $S'_i$, $i\in K_j$, are consecutive blocks partitioning $[n]$, depending on $j$ only through the size of $K_j$. All classes of size $\ell+1$ thus use one partition $\mathcal G_1$ of $[n]$, and all classes of size $\ell$ use another partition $\mathcal G_2$. Unlike the cyclic supports $S_i$ of BCGC, the supports $S'_i$ are not cyclic.

\noindent Comparing the ranks of the three encoding matrices now shows that BCGC is a genuinely new construction, not a relabelling of FRC or GFRC.

\begin{theorem}[Non-equivalence with FRC and GFRC]\label{thm:not-frc}
We have $\operatorname{rank}\mathbf{B}=n-s$, whereas $\operatorname{rank}\mathbf{B}'_{\mathrm{FRC}}=\ell$ when $r=0$ and $\operatorname{rank}\mathbf{B}_{\mathrm{GFRC}}\le2\ell$ when $1\le r\le s$. Consequently, $\mathbf{B}$ is not permutation-equivalent to $\mathbf{B}'_{\mathrm{FRC}}$ for $1\le s\le n-2$, nor to $\mathbf{B}_{\mathrm{GFRC}}$ when $s\ge2$ and $n\ne s+2$.
\end{theorem}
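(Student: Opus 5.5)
The proof splits into three parts: compute the rank of each matrix, then compare the ranks, since permutation equivalence preserves rank.

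\textbf{Rank of $\mathbf{B}$.} The first claim, $\operatorname{rank}\mathbf{B}=n-s$, is exactly Proposition~\ref{prop:code}, so we simply invoke it.

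\textbf{Rank of $\mathbf{B}'_{\mathrm{FRC}}$.} When $r=0$, use $\operatorname{rank}(A\otimes C)=\operatorname{rank}A\cdot\operatorname{rank}C$. With $\operatorname{rank}I_\ell=\ell$ and $\operatorname{rank}\one_{(s+1)\times(s+1)}=1$, we get $\operatorname{rank}\mathbf{B}'_{\mathrm{FRC}}=\ell$.

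\textbf{Rank of $\mathbf{B}_{\mathrm{GFRC}}$.} From the remark after Definition~\ref{def:gfrc}, the rows of classes of size $\ell+1$ are indicator vectors of the blocks of the single partition $\mathcal G_1$. So their span has dimension at most $|\mathcal G_1|=\ell+1$. Similarly, the rows of classes of size $\ell$ span at most $|\mathcal G_2|=\ell$ dimensions. Both spans contain $\one_{1\times n}$, which is the sum of the blocks of either partition. Hence the sum of the two spans has dimension at most
\[
(\ell+1)+\ell-1=2\ell,
\]
so $\operatorname{rank}\mathbf{B}_{\mathrm{GFRC}}\le 2\ell$.

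\textbf{Comparison with FRC.} Here $r=0$, so $n=\ell(s+1)$ and
\[
n-s-\ell=\ell s-s=s(\ell-1).
\]
This is positive as soon as $s\ge1$ and $\ell\ge2$. The condition $s\le n-2$ with $(s+1)\mid n$ forces $s+1\le n/2$, that is $\ell\ge2$. So the ranks differ and the matrices are not permutation-equivalent.

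\textbf{Comparison with GFRC.} This is the main obstacle: we must verify $n-s>2\ell$ when $1\le r\le s$, $s\ge2$ and $n\ne s+2$. Writing $n=\ell(s+1)+r$ gives
\[
n-s-2\ell=\ell(s-1)+r-s.
\]
\emph{Case $\ell\ge2$:} the quantity is at least $2s-2+r-s=s-2+r\ge1$, using $r\ge1$.
\emph{Case $\ell=1$:} the quantity is $r-1$. It is positive iff $r\ge2$, and $r=1$ means exactly $n=s+2$, which is excluded.
In both cases the ranks differ, so the matrices are not permutation-equivalent.

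I expect the only delicate points to be checking the edge cases ($\ell=1$, and the divisibility forcing $\ell\ge2$) and justifying that the two partition spans share $\one_{1\times n}$.
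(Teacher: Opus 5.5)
Your proposal is correct and matches the paper's proof essentially step for step: you invoke Proposition~\ref{prop:code}, use the rank-$\ell$ block structure of $\mathbf{B}'_{\mathrm{FRC}}$, bound the GFRC rank by $(\ell+1)+\ell-1$ using two partition spans that share $\one_{1\times n}$, and then compare ranks. The only differences are cosmetic. You split into the cases $\ell\ge2$ and $\ell=1$, whereas the paper writes $(n-s)-2\ell=(\ell-1)(s-1)+(r-1)$. Your case $\ell\ge2$ also uses the hypothesis $s\ge2$, not just $r\ge1$, and it is worth saying so explicitly.
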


\begin{proof}
By Proposition~\ref{prop:code}, $\operatorname{rank}\mathbf{B}=n-s$. We compute the other two ranks and compare.
\begin{enumerate}
\item \emph{FRC ($r=0$).} The matrix $\mathbf{B}'_{\mathrm{FRC}}=I_{\ell}\otimes\one_{(s+1)\times(s+1)}$ is block diagonal with $\ell$ rank-$1$ blocks, so $\operatorname{rank}\mathbf{B}'_{\mathrm{FRC}}=\ell$. For $1\le s\le n-2$ we have $\ell\ge2$, hence
\[
(n-s)-\ell=s(\ell-1)>0 .
\]
\item \emph{GFRC ($1\le r\le s$).} Definition~\ref{def:gfrc} gives $\mathbf{b}'_{j+q(s+1)}=\mathbf{b}'_{q(s+1)}$ for $j<r$ and $\mathbf{b}'_{j+q(s+1)}=\mathbf{b}'_{s+q(s+1)}$ for $j\ge r$. Hence the row space is $V_0+V_s$, where
\[
V_0:=\operatorname{span}\{\mathbf{b}'_i\mid i\in K_0\},\qquad
V_s:=\operatorname{span}\{\mathbf{b}'_i\mid i\in K_s\}
\]
have dimensions at most $\ell+1$ and $\ell$. Since $\one_{1\times n}\in V_0\cap V_s$,
\[
\begin{aligned}
\operatorname{rank}\mathbf{B}_{\mathrm{GFRC}}
&=\dim V_0+\dim V_s-\dim(V_0\cap V_s)\\
&\le(\ell+1)+\ell-1=2\ell .
\end{aligned}
\]
Moreover $(n-s)-2\ell=(\ell-1)(s-1)+(r-1)\ge0$, and this is positive unless $\ell=r=1$, i.e.\ $n=s+2$.
\end{enumerate}
In every case covered by the statement the ranks differ, so the matrices are not permutation-equivalent.
\end{proof}

\noindent Beyond the rank, the two schemes also differ in how they spread the storage load across the workers. The next remark records the loads of BCGC in both the divisible and non-divisible cases, where it attains the same minimal total deviation as GFRC but over a wider range.

\begin{remark}[Load distribution]\label{rem:loads}
The storage load of BCGC depends on whether $(s+1)\mid n$. We measure its imbalance by the total deviation $d_s(\mathbf{B}):=\sum_{i\in[n]}\big|\|\mathbf{b}_i\|_0-(s+1)\big|$ \cite[Def.~6]{Charalambides2025}.
\begin{enumerate}
\item \emph{Divisible case ($r=0$).} By \eqref{eq:nd-Li} every row of $\mathbf{B}$ has weight $s+1$, so all workers carry the same load $s+1$ and $d_s(\mathbf{B})=0$.
\item \emph{Non-divisible case ($1\le r\le s$).} The loads cannot all equal $s+1$. By \eqref{eq:nd-Li} only the last worker of each class deviates, with load $r$ or $s+1+r$, so $d_s(\mathbf{B})=2r(s+1-r)=d_s(\mathbf{B}_{\mathrm{GFRC}})$ is minimal \cite[Th.~11]{Charalambides2025}. The loads of $\mathbf{B}$ range over $[r,s+1+r]$, wider than $[\lfloor n/(\ell+1)\rfloor,\lceil n/\ell\rceil]$ for GFRC \cite[Th.~8]{Charalambides2025}. This is the price of the cyclic supports, which keep the maximum number of workers, $n-s-1$, at load $s+1$ and $\mathbf{B}$ as close as possible to circulant.
\end{enumerate}
\end{remark}

\section{Compatibility with gradient compression}\label{Sec_Compression}
Without compression, every complete class returns $\mathbf{g}$ exactly, so the decoded class does not matter. Under a nonlinear compressor, it does. In this section, we show that a compressed binary scheme is an uncoded compressed scheme on a coarser partition of the data, that the standard guarantee of compressed gradient descent with a contractive compressor holds on every such partition, and that the classes of BCGC induce more distinct partitions than those of FRC and GFRC.

Compressing the coded messages of gradient coding has been studied in \cite{Li2024OneBit,Li2026Biased,LiCyclic2026,SongChoi2026}, with real-valued or robust-aggregation decoding. 

Each worker $i$ sends $\mathcal{C}(\widehat{\mathbf{g}}_i)$, where $\mathcal{C}:\mathbb{R}^{1\times d}\to\mathbb{R}^{1\times d}$ denotes compression followed by decompression. Given a decoding vector $\mathbf{a}_I$ with $\operatorname{supp}(\mathbf{a}_I)\subseteq I$ and $\mathbf{a}_I\mathbf{B}=\one_{1\times n}$, the master forms
\begin{equation}\label{eq:coded}
\mathcal{C}_I^{\mathrm{cod}}(\mathbf{g}):=\sum_{i\in I}\mathbf{a}_I(i)\,\mathcal{C}(\widehat{\mathbf{g}}_i).
\end{equation}

\begin{remark}[Linear compressors]\label{rem:linear}
If $\mathcal{C}$ is linear, then $\mathcal{C}^{\mathrm{cod}}_I(\mathbf{g})=\mathcal{C}(\mathbf{a}_I\mathbf{B}\bar{\mathbf{g}})=\mathcal{C}(\one_{1\times n}\bar{\mathbf{g}})=\mathcal{C}(\mathbf{g})$, for every scheme and every decoding vector.
\end{remark}

\noindent Most compressors used in practice are nonlinear, such as $\mathrm{Top}\text{-}\kappa$ sparsification \cite{Alistarh2018Topk,Lin2018DGC}, quantization \cite{Alistarh2017QSGD} and sign compression \cite{Karimireddy2019EF}. From now on, $\mathcal{C}$ is deterministic and possibly nonlinear, and we write $\boldsymbol{\varepsilon}_i:=\mathcal{C}(\widehat{\mathbf{g}}_i)-\widehat{\mathbf{g}}_i$ for the compression error of worker $i$.

\noindent We first bound the error of the decoded update by the decoding coefficients times the per-worker compression errors. This makes explicit why real-valued schemes with large coefficients are unstable under compression.

\begin{proposition}[Error of the decoded update]\label{prop:diff}
Let $\mathbf{a}_I$ be a decoding vector of an encoding matrix $\mathbf{B}$. Then, for every compressor $\mathcal{C}$,
\[
\mathcal{C}^{\mathrm{cod}}_I(\mathbf{g})-\mathbf{g}=\sum_{i\in I}\mathbf{a}_I(i)\,\boldsymbol{\varepsilon}_i ,
\]
and
\begin{equation*}
\begin{split}
\big\|\mathcal{C}^{\mathrm{cod}}_I(\mathbf{g})-\mathbf{g}\big\|
\le\min\Big\{&\|\mathbf{a}_I\|_\infty\!\!\sum_{i\in\operatorname{supp}(\mathbf{a}_I)}\!\!\|\boldsymbol{\varepsilon}_i\|,\\
&\|\mathbf{a}_I\|_1\max_{i\in\operatorname{supp}(\mathbf{a}_I)}\|\boldsymbol{\varepsilon}_i\|\Big\}.
\end{split}
\end{equation*}
\end{proposition}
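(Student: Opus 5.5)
The argument is a direct computation from the definitions, with no earlier structural result needed beyond the decoding identity~\eqref{eq:decode}. First I will substitute $\mathcal{C}(\widehat{\mathbf{g}}_i)=\widehat{\mathbf{g}}_i+\boldsymbol{\varepsilon}_i$ into~\eqref{eq:coded}. This gives
\[
\mathcal{C}^{\mathrm{cod}}_I(\mathbf{g})=\sum_{i\in I}\mathbf{a}_I(i)\,\widehat{\mathbf{g}}_i+\sum_{i\in I}\mathbf{a}_I(i)\,\boldsymbol{\varepsilon}_i .
\]
Because $\operatorname{supp}(\mathbf{a}_I)\subseteq I$ and $\mathbf{a}_I\mathbf{B}=\one_{1\times n}$, the first sum equals $\mathbf{a}_I\mathbf{B}\bar{\mathbf{g}}=\mathbf{g}$ exactly as in~\eqref{eq:decode}. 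Subtracting $\mathbf{g}$ then yields the identity. The only point to note is that linearity is used in the $\widehat{\mathbf{g}}_i$ and never in $\mathcal{C}$; this is why the identity holds for every compressor, deterministic or not.

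For the bound, I restrict the sum to $\operatorname{supp}(\mathbf{a}_I)$, since the other terms vanish, and apply the triangle inequality:
\[
\big\|\mathcal{C}^{\mathrm{cod}}_I(\mathbf{g})-\mathbf{g}\big\|\le\sum_{i\in\operatorname{supp}(\mathbf{a}_I)}|\mathbf{a}_I(i)|\,\|\boldsymbol{\varepsilon}_i\| .
\]
This weighted sum can be bounded in two ways, in the spirit of Hölder's inequality with the pairs $(\infty,1)$ and $(1,\infty)$.
\begin{enumerate}
\item Bounding each $|\mathbf{a}_I(i)|$ by $\|\mathbf{a}_I\|_\infty$ gives the first term $\|\mathbf{a}_I\|_\infty\sum_{i\in\operatorname{supp}(\mathbf{a}_I)}\|\boldsymbol{\varepsilon}_i\|$.
\item Bounding each $\|\boldsymbol{\varepsilon}_i\|$ by its maximum over the support gives the second term $\|\mathbf{a}_I\|_1\max_{i\in\operatorname{supp}(\mathbf{a}_I)}\|\boldsymbol{\varepsilon}_i\|$.
\end{enumerate}
Since the left-hand side is at most both quantities, it is at most their minimum.

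There is no real obstacle here. The one thing to state explicitly is that the norm on $\mathbb{R}^{1\times d}$ can be any norm, because only the triangle inequality and homogeneity are used. It is also worth remarking afterwards that for binary schemes $\|\mathbf{a}_I\|_\infty=1$, so the error is at most the plain sum of the per-worker compression errors. For real-valued MDS decoders, by contrast, $\|\mathbf{a}_I\|_\infty$ can be large, and that factor multiplies the compression errors.
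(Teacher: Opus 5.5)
Your proposal is correct and follows the paper's proof essentially step for step. You substitute $\mathcal{C}(\widehat{\mathbf{g}}_i)=\widehat{\mathbf{g}}_i+\boldsymbol{\varepsilon}_i$ and use $\mathbf{a}_I\mathbf{B}\bar{\mathbf{g}}=\mathbf{g}$, then restrict to $\operatorname{supp}(\mathbf{a}_I)$ and apply the triangle inequality, and finally bound the weighted sum once via $\|\mathbf{a}_I\|_\infty$ and once via $\|\mathbf{a}_I\|_1$ times the maximum error.
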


\begin{proof}
Substituting $\mathcal{C}(\widehat{\mathbf{g}}_i)=\widehat{\mathbf{g}}_i+\boldsymbol{\varepsilon}_i$ into \eqref{eq:coded} and using $\sum_{i\in I}\mathbf{a}_I(i)\,\widehat{\mathbf{g}}_i=\mathbf{a}_I\mathbf{B}\bar{\mathbf{g}}=\mathbf{g}$ gives the identity. Restricting to $\operatorname{supp}(\mathbf{a}_I)$, since the other terms vanish, the triangle inequality gives
$$
\big\|\mathcal{C}^{\mathrm{cod}}_I(\mathbf{g})-\mathbf{g}\big\|
=\Big\|\sum_{i\in\operatorname{supp}(\mathbf{a}_I)}\mathbf{a}_I(i)\,\boldsymbol{\varepsilon}_i\Big\|
\le\sum_{i\in\operatorname{supp}(\mathbf{a}_I)}|\mathbf{a}_I(i)|\,\|\boldsymbol{\varepsilon}_i\|,
$$
and bounding $|\mathbf{a}_I(i)|$ by $\|\mathbf{a}_I\|_\infty$, or the sum $\sum_i|\mathbf{a}_I(i)|=\|\mathbf{a}_I\|_1$ times the maximum error, gives the two terms of the minimum.
\end{proof}

\noindent Large decoding coefficients, as in real-valued schemes, thus amplify the compression errors. Binary decoding avoids this.

\begin{corollary}[Binary decoding]\label{cor:binary-bound}
Let $\mathbf{B}\in\{0,1\}^{n\times n}$, and let $R\subseteq I$ satisfy $\sum_{i\in R}\mathbf{b}_i=\one_{1\times n}$. Then $\one_R$ is a decoding vector, and, decoding with $\mathbf{a}_I=\one_R$,
\begin{equation}\label{eq:binary-cod}
\begin{gathered}
\mathcal{C}^{\mathrm{cod}}_I(\mathbf{g})=\sum_{i\in R}\mathcal{C}(\widehat{\mathbf{g}}_i),\\
\big\|\mathcal{C}^{\mathrm{cod}}_I(\mathbf{g})-\mathbf{g}\big\|\le\sum_{i\in R}\|\boldsymbol{\varepsilon}_i\|\le|R|\max_{i\in R}\|\boldsymbol{\varepsilon}_i\| .
\end{gathered}
\end{equation}
\end{corollary}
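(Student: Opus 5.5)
The plan is to specialize Proposition~\ref{prop:diff} to the indicator vector $\one_R$, after first checking that $\one_R$ is an admissible decoding vector.

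\emph{Admissibility.} Let $\mathbf{a}_I:=\one_R$. Since $R\subseteq I$, we have $\operatorname{supp}(\mathbf{a}_I)=R\subseteq I$. Moreover, $\mathbf{a}_I\mathbf{B}=\sum_{i\in R}\mathbf{b}_i=\one_{1\times n}$ by hypothesis. Hence $\one_R$ satisfies the defining condition \eqref{eq:decode-condition} of a decoding vector, with $k=n$.

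\emph{Decoded update.} Substituting $\mathbf{a}_I(i)=1$ for $i\in R$ and $\mathbf{a}_I(i)=0$ for $i\in I\setminus R$ into \eqref{eq:coded} gives
\[
\mathcal{C}^{\mathrm{cod}}_I(\mathbf{g})=\sum_{i\in R}\mathcal{C}(\widehat{\mathbf{g}}_i),
\]
which is the first line of \eqref{eq:binary-cod}.

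\emph{Error bound.} Proposition~\ref{prop:diff} gives the identity $\mathcal{C}^{\mathrm{cod}}_I(\mathbf{g})-\mathbf{g}=\sum_{i\in R}\boldsymbol{\varepsilon}_i$. The triangle inequality then gives $\big\|\mathcal{C}^{\mathrm{cod}}_I(\mathbf{g})-\mathbf{g}\big\|\le\sum_{i\in R}\|\boldsymbol{\varepsilon}_i\|$. Bounding each term by $\max_{i\in R}\|\boldsymbol{\varepsilon}_i\|$ gives the final inequality. Equivalently, these are the two terms of the minimum in Proposition~\ref{prop:diff} with $\|\one_R\|_\infty=1$ and $\|\one_R\|_1=|R|$.

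There is no real obstacle here. The only point that needs care is that the identity in Proposition~\ref{prop:diff} uses only $\mathbf{a}_I\mathbf{B}=\one_{1\times n}$ and $\operatorname{supp}(\mathbf{a}_I)\subseteq I$, both of which were verified in the admissibility step. It does not require $|I|\ge n-s$ or any other property of the scheme.
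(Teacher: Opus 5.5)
Your proposal is correct and follows the paper's proof: you check $\operatorname{supp}(\one_R)=R\subseteq I$ and $\one_R\mathbf{B}=\one_{1\times n}$, then specialize Proposition~\ref{prop:diff} using $\|\one_R\|_\infty=1$ and $\|\one_R\|_1=|R|$. Your explicit substitution into \eqref{eq:coded} for the first line of \eqref{eq:binary-cod} is a step the paper leaves implicit.
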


\begin{proof}
We have $\operatorname{supp}(\one_R)=R\subseteq I$ and $\one_R\mathbf{B}=\one_{1\times n}$, so $\one_R$ is a decoding vector. Since $\|\one_R\|_\infty=1$ and $\|\one_R\|_1=|R|$, Proposition~\ref{prop:diff} gives the two bounds.
\end{proof}

The next result is the key to the analysis. Binary decoding turns the compressed coded update into an uncoded compressed update on a coarser partition of the data.

\begin{proposition}[Coarser partition]\label{prop:coarsening}
Assume that the loss is a sum of per-sample losses, as in Section~\ref{Sec_Prelim}. Let $\mathbf{B}\in\{0,1\}^{n\times n}$ have nonzero rows, let $R$ be as in Corollary~\ref{cor:binary-bound}, and, for $i\in R$, let $S_i:=\operatorname{supp}(\mathbf{b}_i)$ and $\bar{\mathcal{D}}_i:=\bigcup_{j\in S_i}\mathcal{D}_j$. Then $\mathcal{D}=\bigsqcup_{i\in R}\bar{\mathcal{D}}_i$, and, decoding with $\one_R$,
\begin{equation}\label{eq:coarsening}
\mathcal{C}^{\mathrm{cod}}_I(\mathbf{g})=\sum_{i\in R}\mathcal{C}\big(\nabla L(\bar{\mathcal{D}}_i,\boldsymbol{\beta})\big).
\end{equation}
That is, the decoded update is the naive compressed update on the coarse partition $\{\bar{\mathcal{D}}_i\}_{i\in R}$ instead of the fine partition $\{\mathcal{D}_j\}_{j\in[n]}$.
\end{proposition}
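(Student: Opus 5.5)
The plan has two parts: first show that the coarse sets $\bar{\mathcal{D}}_i$ partition $\mathcal{D}$, and then rewrite each coded message as the gradient on its coarse set. Both parts use only the binary identity $\sum_{i\in R}\mathbf{b}_i=\one_{1\times n}$ and the additivity of the loss over samples.

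\emph{Step 1: $\{S_i\}_{i\in R}$ partitions $[n]$.} Since $\mathbf{B}$ is binary, the $l$-th coordinate of $\sum_{i\in R}\mathbf{b}_i$ is $|\{i\in R\mid l\in S_i\}|$. The hypothesis $\sum_{i\in R}\mathbf{b}_i=\one_{1\times n}$ says this count is $1$ for every $l$, so each chunk index lies in exactly one $S_i$. This is the same argument as the first equivalence in Theorem~\ref{thm:char}(1). Each $S_i$ is nonempty because the rows of $\mathbf{B}$ are nonzero, so $\{S_i\}_{i\in R}$ is a genuine partition of $[n]$, with no empty blocks.

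\emph{Step 2: $\{\bar{\mathcal{D}}_i\}_{i\in R}$ partitions $\mathcal{D}$.} The chunks $\mathcal{D}_0,\ldots,\mathcal{D}_{n-1}$ are pairwise disjoint and their union is $\mathcal{D}$. Taking the union of the chunks along the partition of Step~1 gives $\mathcal{D}=\bigsqcup_{i\in R}\bar{\mathcal{D}}_i$. The sets $\bar{\mathcal{D}}_i$ and $\bar{\mathcal{D}}_{i'}$ are disjoint for $i\neq i'$ because $S_i\cap S_{i'}=\emptyset$.

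\emph{Step 3: rewrite each coded message and conclude.} Fix $i\in R$. Because the loss is a sum of per-sample losses and the chunks indexed by $S_i$ are disjoint,
\[
\widehat{\mathbf{g}}_i=\sum_{j\in S_i}\nabla L(\mathcal{D}_j,\boldsymbol{\beta})=\nabla L(\bar{\mathcal{D}}_i,\boldsymbol{\beta}).
\]
Substituting this into the expression $\mathcal{C}^{\mathrm{cod}}_I(\mathbf{g})=\sum_{i\in R}\mathcal{C}(\widehat{\mathbf{g}}_i)$ from Corollary~\ref{cor:binary-bound} gives \eqref{eq:coarsening}.

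No step is a real obstacle. The one point needing care is that the identity $\widehat{\mathbf{g}}_i=\nabla L(\bar{\mathcal{D}}_i,\boldsymbol{\beta})$ holds before compression is applied. The compressor $\mathcal{C}$ is applied to the single vector $\widehat{\mathbf{g}}_i$, so its nonlinearity plays no role and no linearity of $\mathcal{C}$ is needed.
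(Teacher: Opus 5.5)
Your proposal is correct and follows essentially the same route as the paper: you derive $[n]=\bigsqcup_{i\in R}S_i$ from the binary identity and nonzero rows, lift this to $\mathcal{D}=\bigsqcup_{i\in R}\bar{\mathcal{D}}_i$, use additivity to get $\widehat{\mathbf{g}}_i=\nabla L(\bar{\mathcal{D}}_i,\boldsymbol{\beta})$, and substitute into the expression from Corollary~\ref{cor:binary-bound}. Your remark that the compressor acts on each $\widehat{\mathbf{g}}_i$ as a single vector, so its nonlinearity never enters, is exactly why the paper's short argument works.
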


\begin{proof}
Since the rows are binary, nonzero, and $\sum_{i\in R}\mathbf{b}_i=\one_{1\times n}$, we have $[n]=\bigsqcup_{i\in R}S_i$, hence $\mathcal{D}=\bigsqcup_{i\in R}\bar{\mathcal{D}}_i$. Since the loss is a sum of per-sample losses and the chunks are disjoint, $$\widehat{\mathbf{g}}_i=\sum_{j\in S_i}\nabla L(\mathcal{D}_j,\boldsymbol{\beta})=\nabla L(\bar{\mathcal{D}}_i,\boldsymbol{\beta}).$$ Substituting into \eqref{eq:binary-cod} gives \eqref{eq:coarsening}.
\end{proof}

\noindent Since $\mathcal{C}$ is nonlinear, compressing the block gradients $\widehat{\mathbf{g}}_i$ differs from compressing the chunk gradients $\mathbf{g}_j$. We now show that the standard convergence guarantee of compressed gradient descent with a contractive compressor \cite{Stich2018,Karimireddy2019EF}, which holds on the fine partition, also holds on every coarse partition.

\begin{definition}[Contractive compressor \cite{Stich2018,Karimireddy2019EF}]\label{def:contractive}
A compressor $\mathcal C$ is \emph{$\delta$-contractive}, with $\delta\in(0,1]$, if $\|\mathcal C(\mathbf x)-\mathbf x\|^2\le(1-\delta)\|\mathbf x\|^2$ for all $\mathbf x\in\mathbb R^{1\times d}$.
\end{definition}

To prove convergence on a coarse partition, we use standard quantities from federated optimization with unbalanced local datasets \cite{Li2020FedProx,Wang2020FedNova}, weighting each chunk by its fraction of the data and measuring heterogeneity by the weighted variance of the gradients. For $i\in[n]$, let $S_i$ and $\bar{\mathcal D}_i$ be as in Proposition~\ref{prop:coarsening}, set $w_j:=|\mathcal D_j|/N$ and $\bar w_i:=|\bar{\mathcal D}_i|/N=\sum_{j\in S_i}w_j$, and define
\begin{align*}
\zeta^2(\boldsymbol\beta)
&:=\sum_{j\in[n]}
\frac{||\mathbf g_j-w_j\mathbf g||^2}{w_j},
&\bar\zeta_R^2(\boldsymbol\beta):=\sum_{i\in R}
\frac{||\widehat{\mathbf g}_i-\bar w_i\mathbf g||^2}{\bar w_i},
\end{align*}
the heterogeneity of the fine and coarse partitions. Thus $\zeta^2(\boldsymbol\beta)=0$ iff $\mathbf g_j=w_j\mathbf g$ for all $j$, and $\bar\zeta_{[n]}^2=\zeta^2$ when $\mathbf B=I_n$. At round $t=0,\ldots,T-1$, let $I_t$ be the responding set and let $\mathbf g^{(t)}$, $\widehat{\mathbf g}_i^{(t)}$ be the quantities evaluated at $\boldsymbol\beta_t$. We state the following standard assumptions.
\begin{assumption}\label{ass:transfer}
\begin{enumerate}
\item The loss is additive over samples, that is, $L(\mathcal A\sqcup\mathcal A',\cdot)=L(\mathcal A,\cdot)+L(\mathcal A',\cdot)$ for disjoint $\mathcal A,\mathcal A'\subseteq\mathcal D$.
\item The function $L(\mathcal D,\cdot)$ is differentiable, $\rho$-smooth and bounded below by $L^*\in\mathbb R$, that is, for all $\boldsymbol\beta,\boldsymbol\beta'\in\mathbb R^{d}$,
$$
\begin{gathered}
\big\|\nabla L(\mathcal D,\boldsymbol\beta)-\nabla L(\mathcal D,\boldsymbol\beta')\big\|\le\rho\,\|\boldsymbol\beta-\boldsymbol\beta'\|\\
\text{and}\quad L(\mathcal D,\boldsymbol\beta)\ge L^* .
\end{gathered}
$$
\item The compressor $\mathcal C$ is deterministic and $\delta$-contractive for some $\delta\in(0,1]$.
\item There exist $\zeta_*\ge0$ and $M\ge0$ with $(1-\delta)(1+M)<1$ such that $\zeta^2(\boldsymbol\beta)\le\zeta_*^2+M\,\|\nabla L(\mathcal D,\boldsymbol\beta)\|^2$ for all $\boldsymbol\beta$.
\end{enumerate}
\end{assumption}
\noindent The following proposition establishes the convergence of compressed gradient descent on coarse partitions, and we then show that FRC, GFRC and BCGC satisfy its conditions.
\begin{proposition}[Convergence on coarse partitions]\label{prop:transfer}
Let items 1 to 3 of Assumption~\ref{ass:transfer} hold, let $0<\gamma\le1/\rho$, and let $\mathbf B\in\{0,1\}^{n\times n}$ have nonzero rows. At each round $t$, let the master decode with $\one_{R_t}$, where $R_t\subseteq I_t$ and $\sum_{i\in R_t}\mathbf b_i=\one_{1\times n}$, and let $\boldsymbol\beta_{t+1}=\boldsymbol\beta_t-\gamma\,\mathcal C^{\mathrm{cod}}_{I_t}(\mathbf g^{(t)})$.
\begin{enumerate}
\item The iterates satisfy
\[
\frac1T\sum_{t=0}^{T-1}\|\mathbf g^{(t)}\|^2
\le\frac{2\big(L(\mathcal D,\boldsymbol\beta_0)-L^*\big)}{\delta\gamma T}
+\frac{1-\delta}{\delta\,T}\sum_{t=0}^{T-1}\bar\zeta_{R_t}^2(\boldsymbol\beta_t).
\]
\item If, moreover, item 4 holds, then, with $\delta_M:=1-(1-\delta)(1+M)>0$,
\[
\frac1T\sum_{t=0}^{T-1}\|\mathbf g^{(t)}\|^2
\le\frac{2\big(L(\mathcal D,\boldsymbol\beta_0)-L^*\big)}{\delta_M\gamma T}
+\frac{1-\delta}{\delta_M}\,\zeta_*^2 .
\]
\end{enumerate}
\end{proposition}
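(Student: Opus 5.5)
Under the descent lemma I would argue per round and then telescope. By Proposition~\ref{prop:coarsening}, decoding with $\one_{R_t}$ gives $\mathbf u_t:=\mathcal C^{\mathrm{cod}}_{I_t}(\mathbf g^{(t)})=\sum_{i\in R_t}\mathcal C(\widehat{\mathbf g}_i^{(t)})$, where $\{S_i\}_{i\in R_t}$ partitions $[n]$, so $\sum_{i\in R_t}\bar w_i=1$ and $\sum_{i\in R_t}\widehat{\mathbf g}_i^{(t)}=\mathbf g^{(t)}$. The $\rho$-smoothness of item~2 with $\gamma\le1/\rho$ gives
\[
L(\mathcal D,\boldsymbol\beta_{t+1})\le L(\mathcal D,\boldsymbol\beta_t)-\gamma\langle\mathbf g^{(t)},\mathbf u_t\rangle+\tfrac{\gamma}{2}\|\mathbf u_t\|^2 .
\]
Writing $-\langle \mathbf g,\mathbf u\rangle+\tfrac12\|\mathbf u\|^2=\tfrac12\|\mathbf u-\mathbf g\|^2-\tfrac12\|\mathbf g\|^2$, one obtains $L_{t+1}\le L_t-\frac\gamma2\|\mathbf g^{(t)}\|^2+\frac\gamma2\|\mathbf u_t-\mathbf g^{(t)}\|^2$.

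The key step, and the main obstacle, is to bound $\|\mathbf e_t\|^2:=\|\mathbf u_t-\mathbf g^{(t)}\|^2=\|\sum_{i\in R_t}\boldsymbol\varepsilon_i\|^2$ by $(1-\delta)\big(\|\mathbf g^{(t)}\|^2+\bar\zeta^2_{R_t}\big)$, so that it is controlled by the heterogeneity rather than by $|R_t|$. I would use the weighted Cauchy--Schwarz inequality $\|\sum_i\boldsymbol\varepsilon_i\|^2\le\sum_i\|\boldsymbol\varepsilon_i\|^2/\bar w_i$ (valid since $\sum\bar w_i=1$). Then I would apply contractivity (item~3) to get $\le(1-\delta)\sum_i\|\widehat{\mathbf g}_i\|^2/\bar w_i$. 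Next comes the bias--variance identity: expanding $\widehat{\mathbf g}_i=(\widehat{\mathbf g}_i-\bar w_i\mathbf g)+\bar w_i\mathbf g$, the cross term is $2\langle\sum_i(\widehat{\mathbf g}_i-\bar w_i\mathbf g),\mathbf g\rangle=0$. This yields $\sum_i\|\widehat{\mathbf g}_i\|^2/\bar w_i=\bar\zeta^2_{R_t}+\|\mathbf g\|^2$. (Item~1 guarantees $\widehat{\mathbf g}_i=\nabla L(\bar{\mathcal D}_i,\cdot)$, and the rows are nonzero, so $\bar w_i>0$ if chunks are nonempty.)

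Plugging this into the descent inequality gives $L_{t+1}\le L_t-\frac{\gamma\delta}{2}\|\mathbf g^{(t)}\|^2+\frac{\gamma(1-\delta)}{2}\bar\zeta^2_{R_t}$. Summing over $t$, telescoping, using $L_T\ge L^*$ and dividing by $\gamma\delta T/2$ yields part~1.

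For part~2, I would relate the coarse heterogeneity to the fine one by showing $\bar\zeta_R^2\le\zeta^2$. Since $\widehat{\mathbf g}_i-\bar w_i\mathbf g=\sum_{j\in S_i}(\mathbf g_j-w_j\mathbf g)$ and $\bar w_i=\sum_{j\in S_i}w_j$, weighted Cauchy--Schwarz gives $\|\sum_{j\in S_i}\mathbf x_j\|^2/\bar w_i\le\sum_{j\in S_i}\|\mathbf x_j\|^2/w_j$. Summing over the disjoint $S_i$ gives $\bar\zeta_R^2\le\zeta^2\le\zeta_*^2+M\|\mathbf g\|^2$. The per-round inequality then becomes $L_{t+1}\le L_t-\frac{\gamma}{2}(\delta-(1-\delta)M)\|\mathbf g^{(t)}\|^2+\frac{\gamma(1-\delta)}{2}\zeta_*^2$. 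Here $\delta-(1-\delta)M=\delta_M$, and telescoping as before gives part~2.
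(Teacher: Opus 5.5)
Your proposal is correct and follows the paper's proof essentially step for step: the descent lemma with $\rho\gamma\le1$, the weighted Jensen/Cauchy--Schwarz bound on $\|\sum_{i\in R_t}\boldsymbol\varepsilon_i\|^2$, contractivity, the identity $\sum_{i\in R_t}\|\widehat{\mathbf g}_i\|^2/\bar w_i=\|\mathbf g\|^2+\bar\zeta_{R_t}^2$, and, for part~2, the blockwise Cauchy--Schwarz estimate $\bar\zeta_{R_t}^2\le\zeta^2$ combined with item~4. The only differences are cosmetic: you call the first inequality weighted Cauchy--Schwarz where the paper invokes convexity of $\|\cdot\|^2$, and you state explicitly that nonempty chunks are needed for $\bar w_i>0$, which the paper assumes implicitly.
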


\begin{proof}
Fix $t$, and write $\mathbf g=\mathbf g^{(t)}$, $\widehat{\mathbf g}_i=\widehat{\mathbf g}_i^{(t)}.$ By item 1 of Assumption~\ref{ass:transfer} and Proposition~\ref{prop:coarsening}, $\sum_{i\in R_t}\widehat{\mathbf g}_i=\mathbf g$ and $\sum_{i\in R_t}\bar w_i=1$. Let $\mathbf u_i:=\widehat{\mathbf g}_i-\bar w_i\,\mathbf g$ for $i\in R_t$, so that $\sum_{i\in R_t}\mathbf u_i=\mathbf 0$.
\begin{enumerate}
\item Expanding $\|\bar w_i\mathbf g+\mathbf u_i\|^2$ and using $\sum_{i\in R_t}\bar w_i=1$ and $\sum_{i\in R_t}\mathbf u_i=\mathbf 0$ gives
\begin{equation}\label{eq:decomp}
\begin{split}
\sum_{i\in R_t}\frac{\|\widehat{\mathbf g}_i\|^2}{\bar w_i}
&=\sum_{i\in R_t}\Big(\bar w_i\|\mathbf g\|^2+2\langle\mathbf g,\mathbf u_i\rangle+\frac{\|\mathbf u_i\|^2}{\bar w_i}\Big)\\
&=\|\mathbf g\|^2+\bar\zeta_{R_t}^2(\boldsymbol\beta_t).
\end{split}
\end{equation}
By the convexity of $\|\cdot\|^2$, item 3 of Assumption~\ref{ass:transfer} and \eqref{eq:decomp},
\begin{equation}\label{eq:comp-err}
\begin{split}
\|\mathcal C^{\mathrm{cod}}_{I_t}(\mathbf g) -\mathbf g\|^2
&=\Big\|\sum_{i\in R_t}\bar w_i\,\frac{\mathcal C(\widehat{\mathbf g}_i)-\widehat{\mathbf g}_i}{\bar w_i}\Big\|^2\\
&\le\sum_{i\in R_t}\frac{\|\mathcal C(\widehat{\mathbf g}_i)-\widehat{\mathbf g}_i\|^2}{\bar w_i}
\le(1-\delta)\sum_{i\in R_t}\frac{\|\widehat{\mathbf g}_i\|^2}{\bar w_i}\\
&=(1-\delta)\big(\|\mathbf g\|^2+\bar\zeta_{R_t}^2(\boldsymbol\beta_t)\big).
\end{split}
\end{equation}
By item 2 of Assumption~\ref{ass:transfer}, the descent lemma with $\boldsymbol\beta_{t+1}=\boldsymbol\beta_t-\gamma\mathcal C^{\mathrm{cod}}_{I_t}(\mathbf g) $ gives
\[
\begin{split}
L(\mathcal D,\boldsymbol\beta_{t+1})\le L(\mathcal D,\boldsymbol\beta_t)
&-\gamma\langle\mathbf g,\mathcal C^{\mathrm{cod}}_{I_t}(\mathbf g) \rangle\\
&+\frac{\rho\gamma^2}{2}\|\mathcal C^{\mathrm{cod}}_{I_t}(\mathbf g) \|^2 .
\end{split}
\]
Since $\rho\gamma\le1$ and $-2\langle\mathbf g,\mathcal C^{\mathrm{cod}}_{I_t}(\mathbf g) \rangle+\|\mathcal C^{\mathrm{cod}}_{I_t}(\mathbf g) \|^2=\|\mathcal C^{\mathrm{cod}}_{I_t}(\mathbf g) -\mathbf g\|^2-\|\mathbf g\|^2$, we obtain, using \eqref{eq:comp-err},
\begin{align}
L(\mathcal D,\boldsymbol\beta_{t+1})
&\le L(\mathcal D,\boldsymbol\beta_t)-\frac{\gamma}{2}\big(\|\mathbf g\|^2-\|\mathcal C^{\mathrm{cod}}_{I_t}(\mathbf g) -\mathbf g\|^2\big)\nonumber\\
&\le L(\mathcal D,\boldsymbol\beta_t)-\frac{\gamma\delta}{2}\|\mathbf g\|^2+\frac{\gamma(1-\delta)}{2}\,\bar\zeta_{R_t}^2(\boldsymbol\beta_t).\label{eq:one-step}
\end{align}
Summing \eqref{eq:one-step} over $t=0,\ldots,T-1$ gives
\[
\frac{\gamma\delta}{2}\sum_{t=0}^{T-1}\|\mathbf g^{(t)}\|^2
\le L(\mathcal D,\boldsymbol\beta_0)-L(\mathcal D,\boldsymbol\beta_T)+\frac{\gamma(1-\delta)}{2}\sum_{t=0}^{T-1}\bar\zeta_{R_t}^2(\boldsymbol\beta_t).
\]
Using $L(\mathcal D,\boldsymbol\beta_T)\ge L^*$ and dividing by $\gamma\delta T/2$ gives item 1.

\item Since $\widehat{\mathbf g}_i=\sum_{j\in S_i}\mathbf g_j$ and $\bar w_i=\sum_{j\in S_i}w_j$, we have $\mathbf u_i=\sum_{j\in S_i}(\mathbf g_j-w_j\mathbf g)$. The Cauchy--Schwarz inequality gives
\[
\begin{split}
\|\mathbf u_i\|^2
&\le\Big(\sum_{j\in S_i}w_j\Big)\sum_{j\in S_i}\frac{\|\mathbf g_j-w_j\mathbf g\|^2}{w_j}\\
&=\bar w_i\sum_{j\in S_i}\frac{\|\mathbf g_j-w_j\mathbf g\|^2}{w_j}.
\end{split}
\]
Dividing by $\bar w_i$, summing over $i\in R_t$ and using $[n]=\bigsqcup_{i\in R_t}S_i$, we obtain, by item 4 of Assumption~\ref{ass:transfer},
\[
\bar\zeta_{R_t}^2(\boldsymbol\beta_t)\le\zeta^2(\boldsymbol\beta_t)\le\zeta_*^2+M\|\mathbf g\|^2 .
\]
Since $\delta-(1-\delta)M=1-(1-\delta)(1+M)=\delta_M$, inequality \eqref{eq:one-step} becomes
\[
L(\mathcal D,\boldsymbol\beta_{t+1})\le L(\mathcal D,\boldsymbol\beta_t)-\frac{\gamma\delta_M}{2}\|\mathbf g\|^2+\frac{\gamma(1-\delta)}{2}\,\zeta_*^2 .
\]
Summing over $t=0,\ldots,T-1$, using $L(\mathcal D,\boldsymbol\beta_T)\ge L^*$ and dividing by $\gamma\delta_M T/2$ gives item 2.
\end{enumerate}
\end{proof}

\noindent The following corollary applies Proposition~\ref{prop:transfer} to FRC, GFRC and BCGC.

\begin{corollary}\label{cor:topk}
Let items 1, 2 and 4 of Assumption~\ref{ass:transfer} hold, and let $0<\gamma\le1/\rho$.
\begin{enumerate}
\item If $\mathcal C$ is deterministic and $\delta$-contractive with $(1-\delta)(1+M)<1$, then FRC, GFRC and BCGC satisfy, under any pattern of at most $s$ stragglers and for every complete class decoded at each round,
\[
\frac1T\sum_{t=0}^{T-1}\|\mathbf g^{(t)}\|^2
\le\frac{2\big(L(\mathcal D,\boldsymbol\beta_0)-L^*\big)}{\delta_M\gamma T}+\frac{1-\delta}{\delta_M}\,\zeta_*^2,
\]
where $\delta_M=1-(1-\delta)(1+M)$.
\item In particular, if $\mathcal C=\mathrm{Top}\text{-}\kappa$ with $1\le\kappa\le d$ and $(d-\kappa)(1+M)<d$, then
\[
\frac1T\sum_{t=0}^{T-1}\|\mathbf g^{(t)}\|^2
\le\frac{2\big(L(\mathcal D,\boldsymbol\beta_0)-L^*\big)}{\delta_M\gamma T}+\frac{d-\kappa}{d\,\delta_M}\,\zeta_*^2,
\]
where $\delta_M=1-\frac{d-\kappa}{d}(1+M)$.
\end{enumerate}
\end{corollary}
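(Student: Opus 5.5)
The plan is to reduce the corollary to item~2 of Proposition~\ref{prop:transfer}. That proposition already holds for any binary matrix with nonzero rows, provided the master decodes with $\one_{R_t}$ for some $R_t\subseteq I_t$ satisfying $\sum_{i\in R_t}\mathbf b_i=\one_{1\times n}$. For part~1 there are two things to check for each of the three schemes.

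\emph{First, the encoding matrix is binary with nonzero rows.}
\begin{itemize}
\item For BCGC this follows from Definition~\ref{Def_cyclic-support}: every $L_i\ge1$ by \eqref{eq:nd-Li}, since $r\ge1$ whenever the value $r$ occurs.
\item For FRC, each row of $\mathbf{B}_{\text{block}}$ has weight $s+1$.
\item For GFRC, each $L'_{i_q}\ge\lambda\ge1$ or $\ge\mu\ge1$, because $n\ge\ell+1$ and $n\ge\ell$.
\end{itemize}

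\emph{Second, every complete class is an admissible $R_t$.} Whenever at most $s$ workers straggle, Proposition~\ref{prop:partition-sum} (or Theorem~\ref{thm:char}(2) for BCGC) guarantees that some class $K_j$ lies in $I_t$. The rows of every class sum to $\one_{1\times n}$:
\begin{itemize}
\item for BCGC by Proposition~\ref{partition1};
\item for FRC because each class is a copy of $\mathbf{B}_{\text{block}}$;
\item for GFRC because the supports $S'_i$, $i\in K_j$, are consecutive blocks whose lengths sum to $n$.
\end{itemize}
So any complete class chosen at round $t$, whichever rule is used to pick it, qualifies as $R_t$.

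With items~1, 2 and~4 of Assumption~\ref{ass:transfer} and $\delta$-contractivity (item~3) in hand, Proposition~\ref{prop:transfer}(2) gives the bound directly with $\delta_M=1-(1-\delta)(1+M)$. The bound does not depend on the partition, because the proof of that proposition showed $\bar\zeta_{R_t}^2\le\zeta^2$ for any coarse partition.

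Part~2 is then a specialization, and the only real work is to verify that $\mathrm{Top}\text{-}\kappa$ is deterministic and $\delta$-contractive with $\delta=\kappa/d$. Determinism requires fixing a tie-breaking rule, say the smallest index. For contractivity, $\mathcal C(\mathbf x)-\mathbf x$ keeps the $d-\kappa$ smallest-magnitude coordinates of $\mathbf x$. The average of their squares is at most the average of all squared coordinates, since removing the largest entries cannot raise the mean. Hence $\|\mathcal C(\mathbf x)-\mathbf x\|^2\le\frac{d-\kappa}{d}\|\mathbf x\|^2$, that is, $1-\delta=(d-\kappa)/d$.

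Substituting this value, the condition $(1-\delta)(1+M)<1$ becomes $(d-\kappa)(1+M)<d$, and the constant $\frac{1-\delta}{\delta_M}$ becomes $\frac{d-\kappa}{d\,\delta_M}$, which gives the stated bound. The main (mild) obstacle is this averaging argument for $\mathrm{Top}\text{-}\kappa$ together with the tie-breaking needed for determinism. Everything else is bookkeeping that confirms the hypotheses of Proposition~\ref{prop:transfer} for the three constructions.
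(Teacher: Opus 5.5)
Your proposal is correct and follows the paper's proof. You check that the FRC, GFRC and BCGC matrices are binary with nonzero rows, and that every complete class (one always exists under at most $s$ stragglers, since the $s+1$ classes are disjoint) has rows summing to $\one_{1\times n}$; you then apply Proposition~\ref{prop:transfer}(2) and specialize to $\mathrm{Top}\text{-}\kappa$ with $\delta=\kappa/d$. Your averaging argument for the contractivity of $\mathrm{Top}\text{-}\kappa$ and your tie-breaking remark for its determinism supply facts the paper only asserts.
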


\begin{proof}
\begin{enumerate}
\item Item 3 of Assumption~\ref{ass:transfer} holds. The encoding matrices of FRC, GFRC and BCGC are binary with nonzero rows, and the supports of each of their classes partition $[n]$ (Section~\ref{Sec_binary-gc}, Definition~\ref{def:gfrc}, and Proposition~\ref{partition1}). Since the $s+1$ classes are disjoint, at most $s$ stragglers leave a complete class $R_t\subseteq I_t$ at every round, and $\sum_{i\in R_t}\mathbf b_i=\one_{1\times n}$. Hence item 2 of Proposition~\ref{prop:transfer} applies.
\item $\mathrm{Top}\text{-}\kappa$ is deterministic and $\delta$-contractive with $\delta=\kappa/d$. The condition $(1-\delta)(1+M)<1$ then reads $(d-\kappa)(1+M)<d$, and $\frac{1-\delta}{\delta_M}=\frac{d-\kappa}{d\,\delta_M}$, so the claim follows from item 1.
\end{enumerate}
\end{proof}

By Corollary~\ref{cor:topk}, FRC, GFRC and BCGC have the same guarantee, and they differ only through the coarse partitions $\{\bar{\mathcal{D}}_i\}_{i\in R_t}$, that is, through the chunks that are compressed together. For $j\in\{0,\ldots,s\}$, let $\mathcal{P}_j:=\{S_i\}_{i\in K_j}$ be the partition of $[n]$ induced by the class $K_j$.

\noindent We now show that these $s+1$ partitions are genuinely different. The first step is that the cyclic supports make each class separate a distinct pair of adjacent chunks.

\begin{proposition}[Cyclic supports separate adjacent chunks]\label{prop:separation}
Let $\ell\ge2$, and consider the partitions $\mathcal{P}_j$ of BCGC, with indices modulo $n$. For every $l\in[n]$ and every $j\in\{0,\ldots,s\}$, the chunks $l$ and $l+1$ lie in different blocks of $\mathcal{P}_j$ if and only if $l+1\in K_j$. Hence exactly one class of BCGC separates the chunks $l$ and $l+1$.
\end{proposition}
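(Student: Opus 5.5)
By Proposition~\ref{partition1}, each $\mathcal{P}_j$ consists of cyclic intervals $S_{i_0},\ldots,S_{i_{m_j-1}}$ placed one after another around $\mathbb Z_n$, and the interval $S_{i_q}$ starts at $i_q$. I would therefore describe the blocks by their \emph{left endpoints}. Two cyclically adjacent chunks $l$ and $l+1$ lie in different blocks of a partition of $\mathbb Z_n$ into cyclic intervals exactly when $l+1$ is the starting point of some block. In that case $l$ is the last element of the preceding block. The proof then reduces to identifying the set of starting points of $\mathcal{P}_j$.

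First I would check that this reduction is valid. It needs the number of blocks to be at least $2$, so that no single block is all of $\mathbb Z_n$. This is where $\ell\ge2$ enters: $m_j\ge\ell\ge2$ by Lemma~\ref{lem:class-size}, so every $S_i$ has $L_i<n$. For a partition of $\mathbb Z_n$ into at least two cyclic intervals, each interval $\{a,\ldots,a+L-1\}$ with $L<n$ contains the pairs $(a+t,a+t+1)$ for $0\le t\le L-2$. The only pair it breaks is $(a-1,a)$, which is broken because $a-1$ belongs to another block. Since the blocks are disjoint and cover $\mathbb Z_n$, the pair $(l,l+1)$ is separated if and only if $l+1$ is the start of a block.

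Next I would identify the starting points. By Definition~\ref{Def_cyclic-support}, the starting points of $\mathcal{P}_j$ are exactly $\{i_q\}_q=K_j$, so the pair $(l,l+1)$ is separated by $\mathcal{P}_j$ iff $l+1\in K_j$ (indices mod $n$). This holds in both the divisible and the non-divisible case, since only the lengths $L_{i_q}$ of the last block change, not the starting points. The final claim then follows from Lemma~\ref{lem:class-size}: the classes partition $[n]$, so $(l+1)\bmod n$ lies in exactly one $K_j$.

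I expect the main obstacle to be the wrap-around at $l=n-1$, where the pair is $(n-1,0)$ and $0\in K_0$. I would handle it by working throughout in $\mathbb Z_n$. Then the last block of each class, which ends at $j+n-1\equiv j-1$, is treated uniformly: the equality $i_{m_j-1}+L_{i_{m_j-1}}\equiv i_0$ from Proposition~\ref{partition1} shows that $0$ starts a block of $\mathcal{P}_0$ only, and is interior to the wrapping block of every other class.
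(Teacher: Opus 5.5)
Your proposal is correct and follows essentially the paper's argument: both use that every block is a cyclic interval of length $L_i<n$ (this is where $\ell\ge2$ enters), so $l$ and $l+1$ are separated in $\mathcal{P}_j$ exactly when $l+1$ is the starting point of its block, i.e.\ when $l+1\in K_j$, and the final claim then follows because the classes partition $[n]$. The only difference is cosmetic: you obtain $L_i<n$ from $m_j\ge2$, which tacitly uses $L_i\ge1$ for every block (true by \eqref{eq:nd-Li}), whereas the paper uses the direct bound $L_i\le s+1+r<n$.
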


\begin{proof}
Every support is a cyclic interval $S_i=\{i,\ldots,i+L_i-1\}$, and $L_i\le s+1+r<n$ since $\ell\ge2$, with $r:=0$ if $(s+1)\mid n$. If $l+1\in K_j$, the block of $\mathcal{P}_j$ containing $l+1$ is $S_{l+1}$, which starts at $l+1$ and does not contain $l$. Otherwise, the block $S_i$ of $\mathcal P_j$ containing $l+1$ satisfies $l+1=i+u$ with $1\le u\le L_i-1$, so $l\in S_i$. Since the classes partition $[n]$, $l+1$ belongs to exactly one class.
\end{proof}

\noindent This separation property makes the partitions pairwise distinct, so the decoded class genuinely matters under compression.

\begin{proposition}[Distinct partitions]\label{prop:diversity}
If $\ell\ge2$, the $s+1$ partitions $\mathcal{P}_0,\ldots,\mathcal{P}_s$ of BCGC are pairwise distinct, no one of them refines another, and $\mathcal P_j=\mathcal P_0+j$ when $(s+1)\mid n$. In contrast, all the classes of FRC induce the same partition, and those of GFRC induce one of the two partitions $\mathcal G_1$ and $\mathcal G_2$ of Definition~\ref{def:gfrc}.
\end{proposition}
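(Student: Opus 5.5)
The plan is to derive everything from Proposition~\ref{prop:separation}. The key device is the \emph{cut set} of a partition $\mathcal P$ of $\mathbb Z_n$ into cyclic intervals: the set of $l$ such that $l$ and $l+1$ lie in different blocks.

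Since every block of $\mathcal P_j$ is a cyclic interval $S_i$ of length $L_i<n$ (because $\ell\ge2$), the partition is determined by its cut set. Indeed, the blocks are exactly the maximal cyclic runs between consecutive cuts. By Proposition~\ref{prop:separation}, the cut set of $\mathcal P_j$ is
\[
\operatorname{cut}(\mathcal P_j)=\{\,l\in[n]\mid l+1\in K_j\,\}=K_j-1 .
\]

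\textbf{Distinctness and non-refinement.} The sets $K_j$ are nonempty and pairwise disjoint (Lemma~\ref{lem:class-size}), so the cut sets $K_j-1$ are nonempty and pairwise disjoint. This immediately gives $\mathcal P_j\ne\mathcal P_{j'}$ for $j\ne j'$.

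For non-refinement, suppose $\mathcal P_j$ refines $\mathcal P_{j'}$, meaning every block of $\mathcal P_j$ lies inside a block of $\mathcal P_{j'}$. Then any adjacent pair separated by $\mathcal P_{j'}$ is also separated by $\mathcal P_j$, so $\operatorname{cut}(\mathcal P_{j'})\subseteq\operatorname{cut}(\mathcal P_j)$. A nonempty set cannot be contained in a set disjoint from it, so this forces $j=j'$.

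The one point to check carefully is that a partition into cyclic intervals is recovered from its cut set, and that refinement reverses inclusion of cut sets. For a partition into cyclic intervals with at least one cut, each block is the cyclic interval from one cut successor to the next cut. Here $\ell\ge2$ guarantees at least two blocks, so the cut set is nonempty.

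\textbf{Shift relation when $r=0$.} Here $S_i=\{i,\dots,i+s\}$ for every $i$, so $S_{i+j}=S_i+j$. Proposition~\ref{Prop_Kshift} iterated gives $K_j=K_0+j$. Combining the two,
\[
\mathcal P_j=\{S_{i+j}\}_{i\in K_0}=\{S_i+j\}_{i\in K_0}=\mathcal P_0+j .
\]

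\textbf{FRC and GFRC.} In $\mathbf{B}_{\mathrm{FRC}}$ every class is a copy of $\mathbf{B}_{\text{block}}$, so every class induces the same partition $\{\{q(s+1),\dots,q(s+1)+s\}\}_{q}$. For GFRC, the remark after Definition~\ref{def:gfrc} shows that $S'_{i_q}$ depends on $j$ only through $|K_j|$. Hence the classes with $j<r$ all induce $\mathcal G_1$ and those with $j\ge r$ all induce $\mathcal G_2$.

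The main obstacle is only notational: making the cut-set characterization rigorous for cyclic intervals that wrap around $n$. The substantive content is already contained in Proposition~\ref{prop:separation}.
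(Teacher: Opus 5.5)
Your proof is correct and follows the same route as the paper. Both derive distinctness and non-refinement from Proposition~\ref{prop:separation}; the paper only uses the single witness pair $(j-1,j)$, which is the element $j-1$ of your cut set $K_j-1$. Both treat the shift relation, FRC and GFRC the same way. The step you flag as the main obstacle, recovering an interval partition from its cut set, is never actually used: you only need that the cut set is a function of the partition and that refinement reverses inclusion of cut sets, and both hold for arbitrary partitions.
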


\begin{proof}
Let $j\neq j'$. Since $j\in K_j$ and $j\notin K_{j'}$, Proposition~\ref{prop:separation} shows that $\mathcal{P}_j$ separates the chunks $j-1$ and $j$, whereas $\mathcal{P}_{j'}$ does not. Exchanging $j$ and $j'$, the partition $\mathcal{P}_{j'}$ separates the chunks $j'-1$ and $j'$, whereas $\mathcal{P}_j$ does not. Hence neither partition refines the other. If $(s+1)\mid n$, then $K_j=K_0+j$ (Proposition~\ref{Prop_Kshift}) and $S_{i+j}=S_i+j$, so $\mathcal P_j=\mathcal P_0+j$. Every class of $\mathbf{B}'_{\mathrm{FRC}}$ selects one row from each diagonal block, and the statement on GFRC follows from Definition~\ref{def:gfrc}.
\end{proof}

\section{Experiments}
\label{Sec_Experiments}

We implement all schemes in MPI (\texttt{mpi4py}) on the Toubkal supercomputer~\cite{kissami2025toubkal}, with one master and $n=100$ workers, each pinned to a core. Communication is modeled as $T=\alpha+\mathrm{bytes}/b$ ($\alpha=5\times10^{-4}$\,s, $b=30$\,MB/s), and the master's decoding time is included in every reported wall time. We use four datasets, \emph{Amazon} Employee Access with logistic regression ($d=213{,}776$), \emph{MNIST} and \emph{EMNIST} digits with softmax regression ($d=7{,}850$), and \emph{20~Newsgroups} with softmax regression on TF\nobreakdash-IDF features ($d=788{,}480$).

We compare BCGC with four schemes: the uncoded (\emph{naive}) scheme, the binary GFRC of~\cite{Charalambides2025}, which coincides with FRC when $(s+1)\mid n$, and two MDS gradient codes with real or complex coefficients, the Gaussian MDS code (G-MDS) of~\cite{Tandon2017} and the Reed--Solomon code (RS) of~\cite{Halbawi2018}. Each worker compresses its coded message with top $\kappa$ sparsification, $\kappa=\lfloor0.05\,d\rfloor$. The RS messages are complex, so we sparsify them by modulus and keep the real part of the decoded sum. All schemes run $100$ iterations of Nesterov accelerated gradient descent from $\boldsymbol\beta_0=\mathbf 0$ in double precision, with step size $\gamma=10$ on Amazon, $0.1$ on MNIST and EMNIST, and $1.0$ on 20~Newsgroups. This optimizer differs from the plain gradient descent of Proposition~\ref{prop:transfer}, so the experiments illustrate the analysis without being formally covered by it. Without compression, the naive and binary schemes recover the exact gradient and reach an AUC of $0.883$ on Amazon and an accuracy of $0.915$, $0.925$ and $0.617$ on MNIST, EMNIST and 20~Newsgroups. The MDS codes reach the same values at $s=20$. At $s=24$, however, the complex decoding of RS is numerically unstable, and its accuracy already falls to $0.02$ on 20~Newsgroups without compression.

All schemes share the same initialization, data and straggler schedule, so the differences between them come only from the coding scheme. Without compression, the runs of the exact schemes are identical, which confirms exact recovery, and top $\kappa$ sparsification is deterministic.

\subsection{Robustness of compressed gradient coding}
\label{Sec_ExpSetup}

We first isolate the effect of compression on the decoding step. Each scheme tolerates up to $s$ stragglers, and exactly $s$ workers are delayed at each iteration according to a fixed schedule. The binary schemes stop at the first complete class and sum its messages, whereas the MDS codes wait for $n-s$ responses and solve the linear system $\mathbf a_I\mathbf B=\one_{1\times n}$. Each worker compresses its message with top $\kappa$ sparsification, $\kappa=\lfloor0.05\,d\rfloor$. We consider two straggler levels, $s=20$, for which $(s+1)\nmid n$, and $s=24$, for which $(s+1)\mid n$.

Figure~\ref{fig:p1-amazon-metrics} shows the convergence on Amazon, Figure~\ref{fig:p1-wall} the wall time, and Tables~\ref{tab:stability}--\ref{tab:coeffs} the results on all four datasets. We report three findings: accuracy, wall time, and the effect of the decoding coefficients.

\emph{Accuracy.} Like the naive scheme, BCGC and GFRC stay within $0.022$ of their uncompressed metric. The MDS codes, in contrast, diverge under compression. At $s=20$ their metric falls to between $0.06$ and $0.58$. At $s=24$, G-MDS recovers the exact gradient without compression but diverges with it, while RS is numerically unstable even without compression, where its accuracy on 20~Newsgroups is already $0.02$; its apparent convergence under compression is therefore a numerical artifact rather than robustness.

\emph{Wall time.} BCGC and GFRC are the fastest coded schemes, about $27$\,s on Amazon against $46$ to $56$\,s for the MDS codes. The two binary schemes are identical at $s=24$. At $s=20$, where $(s+1)\nmid n$, BCGC is at most $17\%$ slower, on 20~Newsgroups, because of its larger load on a few workers (Remark~\ref{rem:loads}).

\emph{Decoding coefficients.} By Proposition~\ref{prop:diff}, the decoded error is bounded by $\|\mathbf a_I\|_1\max_i\|\boldsymbol\varepsilon_i\|$. The binary codes keep $\|\mathbf a_I\|_1\approx4$ to $5$ with unit coefficients, against about $10^{4}$ for G-MDS (Table~\ref{tab:coeffs}), which explains why the MDS codes amplify the compression error.

Overall, BCGC matches the binary state of the art and, unlike the MDS codes, preserves its accuracy under compression.
\begin{figure}[t]
\centering
\begin{subfigure}{\linewidth}
\centering
\includegraphics[width=\linewidth]{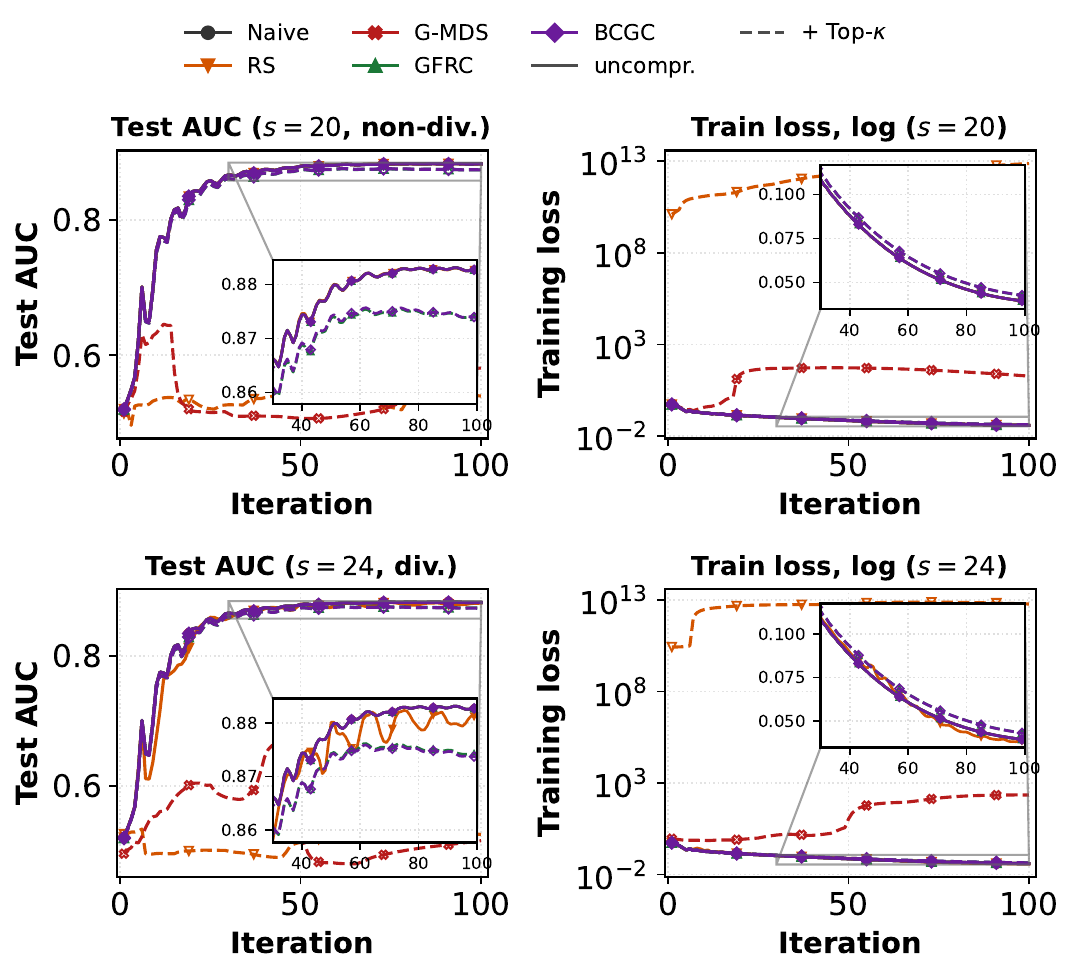}
\caption{Convergence: test AUC (left) and training loss on a logarithmic scale (right), for $s=20$ (top) and $s=24$ (bottom); solid lines uncompressed, dashed lines compressed.}
\label{fig:p1-amazon-metrics}
\end{subfigure}
\\[4pt]
\begin{subfigure}{\linewidth}
\centering
\includegraphics[width=\linewidth]{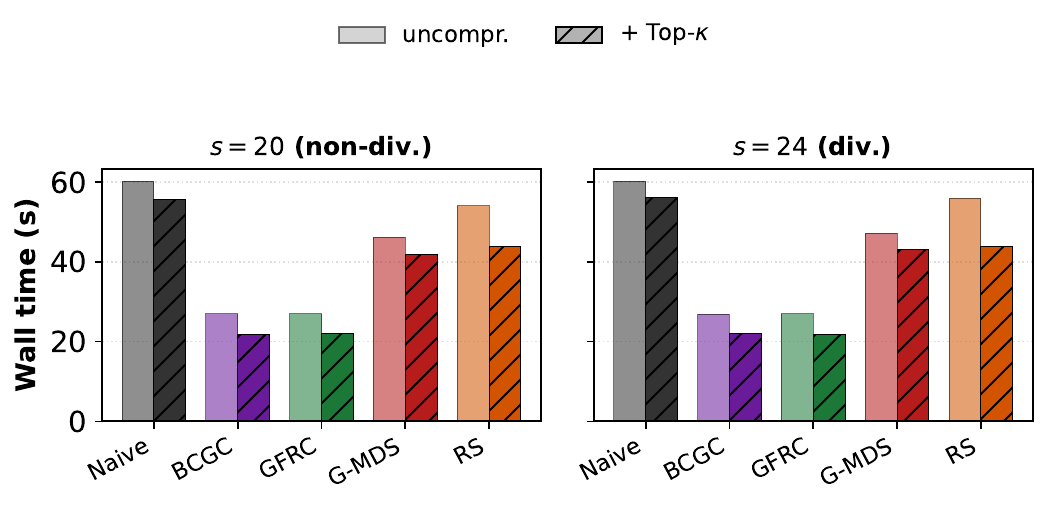}
\caption{Total wall time for $100$ iterations, without compression (plain bars) and with top $\kappa$ sparsification (hatched bars), for $s=20$ (left) and $s=24$ (right).}
\label{fig:p1-wall}
\end{subfigure}
\caption{Robustness on Amazon under top $\kappa$ sparsification, $n=100$.}
\label{fig:p1-amazon}
\end{figure}
\begin{table}[t]
\centering
\caption{Final test accuracy on all datasets with top $\kappa$ sparsification, $\kappa=\lfloor0.05\,d\rfloor$, $n=100$.}
\label{tab:stability}
\setlength{\tabcolsep}{3pt}\small
\begin{tabular}{llcccccc}
\toprule
Dataset & $s$ & Uncompr. & Naive & \textbf{BCGC} & GFRC & G-MDS & RS\\
\midrule
\multirow{2}{*}{Amazon} & 20 & \multirow{2}{*}{0.883} & 0.883 & 0.874 & 0.874 & 0.581 & 0.540\\
                        & 24 &                        & 0.883 & 0.874 & 0.874 & 0.516 & 0.525\\
\midrule
\multirow{2}{*}{MNIST}  & 20 & \multirow{2}{*}{0.915} & 0.895 & 0.894 & 0.893 & 0.196 & 0.087\\
                        & 24 &                        & 0.895 & 0.895 & 0.895 & 0.116 & 0.896\\
\midrule
\multirow{2}{*}{EMNIST} & 20 & \multirow{2}{*}{0.925} & 0.903 & 0.903 & 0.903 & 0.153 & 0.202\\
                        & 24 &                        & 0.903 & 0.903 & 0.903 & 0.206 & 0.904\\
\midrule
\multirow{2}{*}{20\,NG} & 20 & \multirow{2}{*}{0.617} & 0.617 & 0.609 & 0.609 & 0.151 & 0.063\\
                        & 24 &                        & 0.617 & 0.609 & 0.610 & 0.100 & 0.603\\
\bottomrule
\end{tabular}
\end{table}

\begin{table}[t]
\centering
\caption{Mean norm $\|\mathbf a_I\|_1$ of the decoding vector for $n=100$.}
\label{tab:coeffs}
\setlength{\tabcolsep}{3pt}\small
\begin{tabular}{llcccc}
\toprule
Dataset & $s$ & \textbf{BCGC} & GFRC & G-MDS & RS\\
\midrule
\multirow{2}{*}{Amazon} & 20 & $4.7$ & $4.7$ & $2.0{\times}10^{4}$ & $53$\\
                        & 24 & $4.0$ & $4.0$ & $1.3{\times}10^{4}$ & $2.2{\times}10^{2}$\\
\midrule
\multirow{2}{*}{MNIST}  & 20 & $4.8$ & $4.8$ & $2.0{\times}10^{4}$ & $53$\\
                        & 24 & $4.0$ & $4.0$ & $1.3{\times}10^{4}$ & $2.2{\times}10^{2}$\\
\midrule
\multirow{2}{*}{EMNIST} & 20 & $5.0$ & $4.9$ & $2.0{\times}10^{4}$ & $53$\\
                        & 24 & $4.0$ & $4.0$ & $1.3{\times}10^{4}$ & $2.2{\times}10^{2}$\\
\midrule
\multirow{2}{*}{20\,NG} & 20 & $4.5$ & $4.6$ & $2.0{\times}10^{4}$ & $53$\\
                        & 24 & $4.0$ & $4.0$ & $1.3{\times}10^{4}$ & $1.8{\times}10^{4}$\\
\bottomrule
\end{tabular}
\end{table}

\subsection{Partition diversity on heterogeneous data} \label{Sec_diversity}

All binary schemes share the same convergence guarantee and differ only in the partitions induced by their complete classes (Corollary~\ref{cor:topk}). To isolate this effect, we use a \emph{minimum error} decoder. Among the classes that are complete after $n-s$ responses, it selects the one with the smallest total compression error $\sum_{i\in K_j}\|\boldsymbol\varepsilon_i\|$, where $\boldsymbol\varepsilon_i=\mathcal C(\widehat{\mathbf g}_i)-\widehat{\mathbf g}_i$. Each worker also sends the scalar $\|\boldsymbol\varepsilon_i\|$, which adds an $O(1)$ communication cost, so BCGC and GFRC wait for the same $n-s$ responses and have the same wall time. We fix $s=24$, so that $(s+1)\mid n$, and use top $\kappa$ sparsification with $\kappa=\lfloor0.05\,d\rfloor$, $n=100$ and $b=30$\,MB/s. For each dataset we compare two data partitions. The i.i.d.\ partition is obtained by shuffling the data. The non-i.i.d.\ partition is obtained by sorting the data by label, so that each chunk contains essentially one class.

Table~\ref{tab:diversity} and Figures~\ref{fig:p2-mnist} and~\ref{fig:p2-acc} report the results. On the i.i.d.\ partition, BCGC and GFRC coincide, as predicted by the theory. On the non-i.i.d.\ partition, BCGC has a total compression error $15$ to $19\%$ lower than GFRC on Amazon, MNIST and EMNIST. It improves the accuracy by $5.0$ points on MNIST and $4.9$ points on EMNIST, removing about half of the accuracy lost to compression, at the same wall time of about $40$\,s. This wall time is longer than that of the first-complete-class decoder of Table~\ref{tab:stability}, which waits for fewer responses. On Amazon and 20~Newsgroups the two schemes agree to within $0.001$, because the partition has little effect when compression loses little information, as with the sparse text gradients, or when the chunks are similar. The gain therefore requires both heterogeneous data and a significant compression error, and BCGC is never worse than GFRC by more than $0.001$.

\begin{table}[t]
\centering
\caption{Partition diversity with the minimum error decoder and top $\kappa$ sparsification, $\kappa=\lfloor0.05\,d\rfloor$, $n=100$, $s=24$.}
\label{tab:diversity}
\setlength{\tabcolsep}{3pt}\small
\begin{tabular}{lcccccc}
\toprule
 & \multicolumn{2}{c}{i.i.d.} & \multicolumn{2}{c}{non-i.i.d.} & \multicolumn{2}{c}{$\sum_{i\in K}\|\boldsymbol\varepsilon_i\|$}\\
\cmidrule(lr){2-3}\cmidrule(lr){4-5}\cmidrule(lr){6-7}
Dataset & BCGC & GFRC & \textbf{BCGC} & GFRC & BCGC & GFRC\\
\midrule
Amazon (AUC) & 0.874 & 0.874 & \textbf{0.877} & \textbf{0.877} & $\mathbf{111}$ & $130$\\
MNIST        & 0.895 & 0.895 & \textbf{0.875} & 0.825 & $\mathbf{4.1{\times}10^{4}}$ & $5.0{\times}10^{4}$\\
EMNIST       & 0.902 & 0.903 & \textbf{0.880} & 0.831 & $\mathbf{1.9{\times}10^{5}}$ & $2.3{\times}10^{5}$\\
20\,NG       & 0.609 & 0.610 & \textbf{0.615} & 0.614 & $\mathbf{94}$ & $95$\\
\bottomrule
\end{tabular}
\end{table}

\begin{figure*}[t]
\centering
\includegraphics[width=\linewidth]{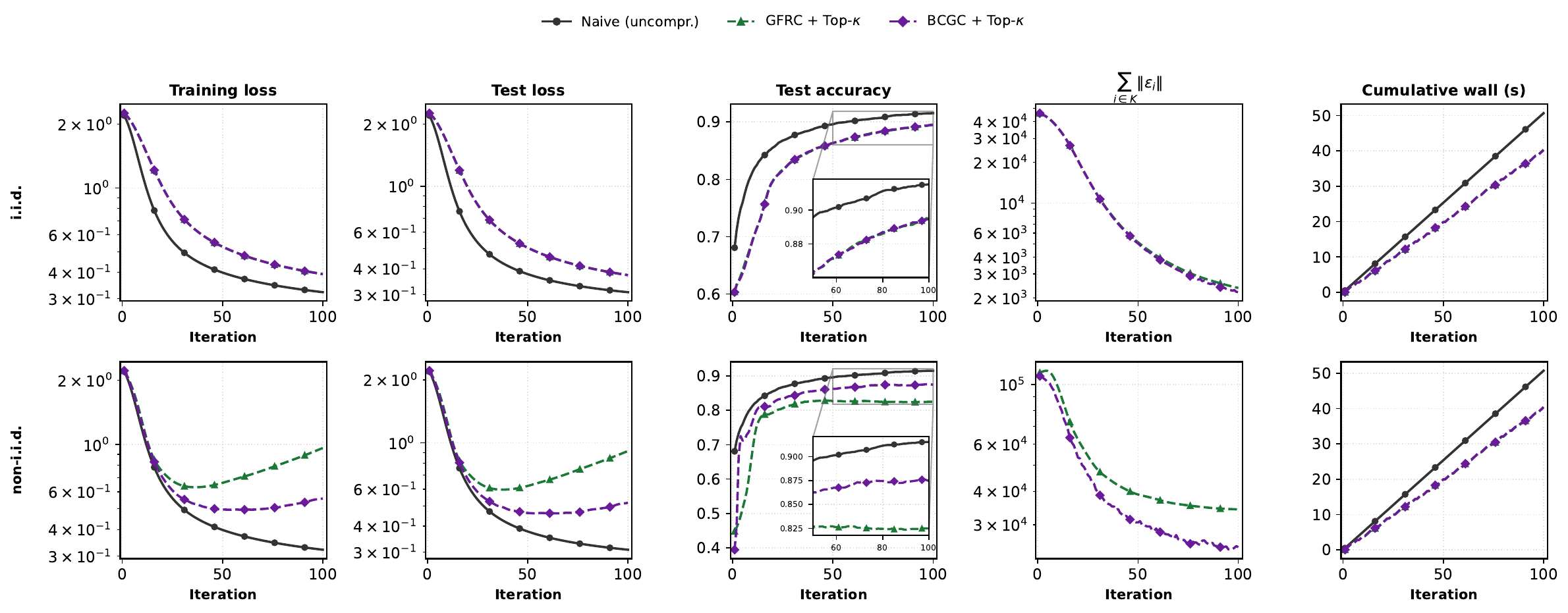}
\caption{Partition diversity on MNIST with the minimum error decoder, $n=100$, $s=24$, on the i.i.d.\ partition (top) and the non-i.i.d.\ partition (bottom).}
\label{fig:p2-mnist}
\end{figure*}

\begin{figure*}[t]
\centering
\includegraphics[width=\linewidth]{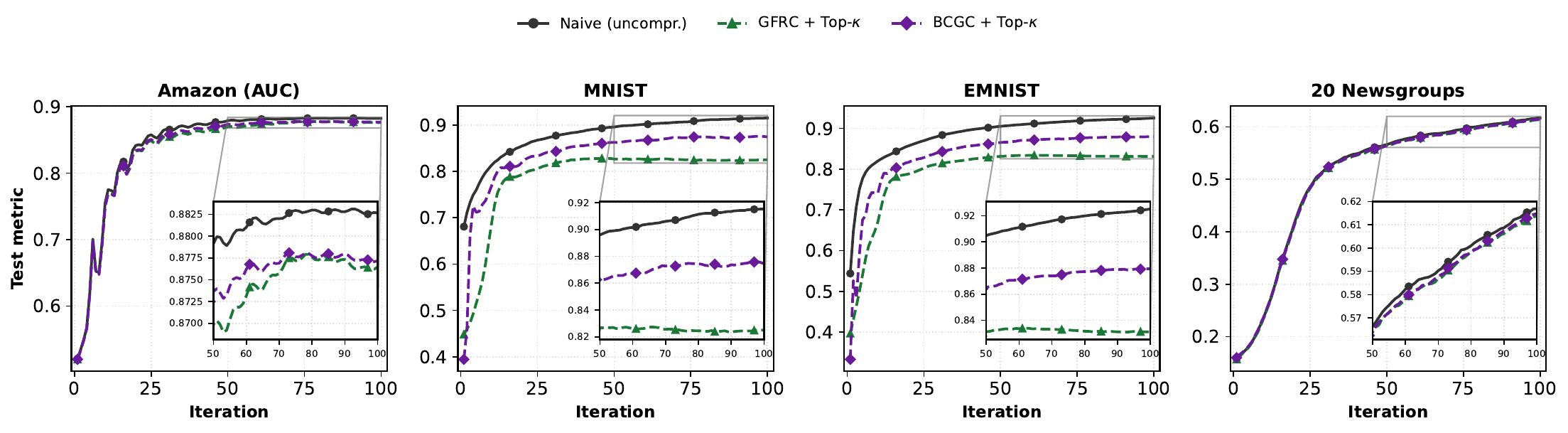}
\caption{Test metric on the non-i.i.d.\ partition with the minimum error decoder, $n=100$, $s=24$.}
\label{fig:p2-acc}
\end{figure*}

\section{Conclusion}
\label{Sec_Conclusion}

 We introduced BCGC, a binary gradient code with a circulant or near-circulant encoding matrix whose binary recovering sets are exactly its $s+1$ classes. It tolerates any $s$ stragglers at the optimal threshold $n-s$ with minimal total load $n(s+1)$, decodes by a single summation, and for $2\le s\le n-2$ is neither MDS nor permutation-equivalent to FRC or GFRC. Under compression it reduces to uncoded compression on a coarser partition, which keeps it stable where MDS codes amplify the error, and its $s+1$ distinct partitions let it select the reconstruction closest to the exact gradient. Experiments on four datasets confirm that BCGC matches GFRC, runs faster than the MDS codes, stays stable under compression, and improves accuracy by up to $5$ points on heterogeneous data. Future work will extend the approach to broader compressors and dynamic coding schemes.

\section*{Acknowledgment}
The authors gratefully acknowledge the support and computing resources provided by the Toubkal Supercomputer~\cite{kissami2025toubkal} (https://toubkal.um6p.ma) at UM6P, Morocco.

{\sloppy\printbibliography}

\end{document}